\newcolumntype{A}{>{\columncolor[gray]{0.8}}c}
\newcolumntype{B}{>{\columncolor[gray]{0.6}}c}
\DeclareMathOperator{\var}{Var}
\DeclareMathOperator{\argmin}{argmin}
\newcommand{\bsu}{\boldsymbol{u}}    
\newcommand{\bsv}{\boldsymbol{v}}    
\newcommand{\bsw}{\boldsymbol{w}}    
\newcommand{\bsz}{\boldsymbol{z}}    
\newcommand{\bsS}{\boldsymbol{S}}    
\newcommand{\bsW}{\boldsymbol{W}}    
\newcommand{\bsU}{\boldsymbol{U}}    
\newcommand{\bsDelta}{\boldsymbol{\Delta}}    
\def\citep#1#2{\cite[{#1}]{#2}}
\theoremstyle{plain}
  \newtheorem{theorem}{Theorem}
  \newtheorem{lemma}{Lemma}
\theoremstyle{definition}
\theoremstyle{remark}
  \newtheorem*{remark}{Remark}
\newcommand{\RefEq}[1]{~\textup{(\ref{#1})}}
\newcommand{\RefEqEq}[2]{~\textup{(\ref{#1})} and~\textup{(\ref{#2})}}
\newcommand{\RefSec}[1]{Section~\textup{\ref{#1}}}
\newcommand{\RefSecSec}[2]{Sections~\textup{\ref{#1}} and~\textup{\ref{#2}}}
\newcommand{\RefThm}[1]{Theorem~\textup{\ref{#1}}}
\newcommand{\RefAlg}[1]{Algorithm~\textup{\ref{#1}}}
\newcommand{\RefFig}[1]{Figure~\textup{\ref{#1}}}
\newcommand{\RefTab}[1]{Table~\textup{\ref{#1}}}
\begin{document}

\title{Conditional sampling for barrier option pricing \linebreak under the LT method}
\author{Nico Achtsis \and Ronald Cools \and Dirk Nuyens}
\maketitle


\begin{abstract}
We develop a conditional sampling scheme for pricing knock-out barrier options under the Linear Transformations (LT) algorithm from Imai and Tan (2006), ref.\ \cite{IT2006}.
We compare our new method to an existing conditional Monte Carlo scheme from Glasserman and Staum (2001), ref.\ \cite{GS2001}, and show that a substantial variance reduction is achieved.
We extend the method to allow pricing knock-in barrier options and introduce a root-finding method to obtain a further variance reduction.
The effectiveness of the new method is supported by numerical results.
\end{abstract}


\section{Introduction}
Barrier options are financial instruments whose payoff depend on an underlying asset hitting a specified level or not during some period.  The simplest example would be a regular knock-out call option, where the payoff is set to zero if the option hits a certain barrier level during the lifetime of the option, and results in the payoff of a regular call option otherwise. These options are popular because they cost at most the amount of their vanilla counterparts as there is a higher probability of ending up with nothing. For more background information on barrier options and their applications, see for instance \cite{DK1996}, \cite{DK1997}, \cite{Wilmott2006} and \cite{Wystup2006}.
For sufficiently simple products there exist analytical formulas for the value, see, e.g., all the previous references, or \cite{Hull2005}. When the product is too complex, one must resort to numerical methods to obtain the value. Furthermore, if the barrier is not monitored continuously but discretely, no accurate closed form formulas are available. Several papers exist dealing with discretely monitored barrier options, e.g., \cite{JT2010} and \cite{ZFV1998}.

A straightforward simulation method for pricing discretely monitored options is to sample the assets in each monitoring date. However, when a barrier condition is introduced in the payoff, only a minority of the paths may lead to non-zero values if the probability of satisfying the barrier condition is low. This leads to significant increases in the variance of the estimator. Glasserman and Staum \cite{GS2001} remedy this problem by introducing a sampling scheme where the asset with the barrier condition is sampled conditional on survival in each monitoring date. In \cite{JT2010} a different scheme is used based on the hitting time of the asset as the sampling variable instead of the asset increments. Both of these papers use Monte Carlo to sample the necessary variates.

In this paper we use low-discrepancy points to generate the uniform variates for sampling the asset paths.  
The merit of using low-discrepancy points for valuing financial products is discussed in \cite{PP1996} and \cite{PT1995}.
Since then quasi-Monte Carlo methods have often been applied to applications in finance, e.g., \cite{LL2000,CKN2006,GKSW2008,LEC2009,Lem2009,NW2012}.
Care must be taken however in the construction of the asset paths, see for instance \cite{Papa2002} or \cite{WS2011}, where it is shown how the path construction method can have a large influence on the convergence rate, and also \cite{WT2012}, which discusses the influence on the nature of the discontinuity.
Imai and Tan \cite{IT2006} provide an alternative to the Brownian Bridge or PCA construction, called the Linear Transformation method (or LT for short).
They provide an optimal construction, in some sense, by minimizing the effective dimension of the underlying payoff function.
We reconcile this LT method with the idea of conditional sampling, to gain a ``best of two worlds'' approach for valuing barrier options.
Even though our method can be applied using other covariance matrix decompositions, such as PCA and Cholesky, we will use the LT method because of its proven performance, see \cite{IT2006}.

The outline of our paper is as follows. 
In \RefSecSec{sec:LTmethod}{sec:CondSampl} we present an overview of the LT algorithm from \cite{IT2006} and the conditional sampling scheme from \cite{GS2001}. 
\RefSec{sect:LTCS} contains the main results of our paper: a modified conditional sampling scheme compatible with the LT algorithm. 
In \RefSec{sec:numres} we give several numerical examples to illustrate our method.
Then in \RefSecSec{sec:knock-in}{sec:root-finding} we present two extensions to our sampling scheme. The first extension deals with knock-in options, and the second one is a modification to combine the quasi-Monte Carlo method with an analytical integration using root-finding.
We summarize our results in \RefSec{sec:conclusion}.


\section{The LT method}\label{sec:LTmethod}

We assume a Black--Scholes world in which the risk-neutral dynamics of the assets are given by
\begin{align*}
 dS_i(t)
  &=
  rS_i(t)dt + \sigma_iS_i(t)dW_i(t)
  ,
  \qquad
  i=1,\ldots,n,
\end{align*}
where $S_i(t)$ denotes the price of asset $i$ at time $t$, $r$ is the risk-free interest rate and $\sigma_i$ the volatility of asset $i$.
Also, $\bsW=(W_1(t),\ldots,W_n(t))$ is an $n$-dimensional Brownian motion, with $dW_idW_j=\rho_{ij}dt$.
For more information on this type of market model, see for instance \cite{Fries2007}, \cite{Glass2003}, \cite{Hull2005} or \cite{Wilmott2006}.
When resorting to Monte Carlo techniques for pricing options under this model, asset paths need to be sampled at each time point. 
In what follows, we assume an equally spaced time discretization $\Delta t=T/m$ for simplicity, but we note that all results can be applied without using this assumption.
Under this assumption, we have $t_j=j\Delta t$.
Under Black--Scholes dynamics we simulate the trajectories according to
\begin{align*}
 S_i(t)
  &=
  S_i(0)e^{(r-\sigma_i^2/2)t+\sigma_iW_i(t)}
  .
\end{align*}
Write $\widetilde{\bsW}=(\sigma_1W_1(t_1),\ldots,\sigma_1W_1(t_m),\sigma_2W_2(t_1),\ldots,\sigma_nW_n(t_m))'$, where we use the prime to denote the transpose of a vector.
Then $\widetilde{\bsW}$ is multivariate normally distributed with covariance matrix
\begin{align*}
 \widetilde{\Sigma}
  &=
  \begin{pmatrix} \Sigma_{11} & \Sigma_{21} & \cdots & \Sigma_{n1} \\
                  \Sigma_{12} & \Sigma_{22} & \cdots & \Sigma_{n2} \\
                  \vdots & \vdots & \ddots & \vdots \\
                  \Sigma_{1n} & \Sigma_{2n} & \cdots & \Sigma_{nn}
  \end{pmatrix}
  ,
\end{align*}
where 
\begin{align*}
 \Sigma_{ij}
 &=
 \begin{pmatrix} 
 1 & 1 & \cdots & 1 \\
 1 & 2 & \cdots & 2 \\
 \vdots & \vdots & \ddots & \vdots \\
 1 & 2 & \cdots & m 
 \end{pmatrix}\rho_{ij}\sigma_i\sigma_j\Delta t.
\end{align*}
One approach to simulate $\widetilde{\bsW}$ is to decompose the matrix $\widetilde{\Sigma}$ as
\begin{align*}
 CC'
  &=
  \widetilde{\Sigma}
\end{align*}
where $C$ is the Cholesky factor (see, e.g., \cite{Glass2003}) and calculate
\begin{align}
 \widetilde{\bsw}
  &=
  C\bsz
  \label{eq:cholfac}
\end{align}
where $\bsz$ is an $mn$-dimensional vector of i.i.d.\ standard normally distributed variables.
Assuming that we have a European option payoff represented as
\begin{align*}
 &
  \max\left( f(\widetilde{\bsW}),0\right)
\end{align*}
the standard Monte Carlo method simulates the function $f(\widetilde{\bsW})$ by mapping a uniform variate $\bsu$ in the unit cube to $\bsz$ by applying the inverse cumulative distribution function $\Phi^{-1}$. 
When using quasi-Monte Carlo, the uniform variates are replaced by a low-discrepancy point set.
The matrix $C$ in\RefEq{eq:cholfac} can be multiplied with any orthogonal matrix $Q$ while retaining the correct distribution for $\widetilde{\bsw}$, see, e.g., \cite{Glass2003}.
Under the LT method described in \cite{IT2006}, instead of simulating $f$ directly from $\bsz$, first an orthogonal transformation is applied, i.e., $f$ is simulated from
\begin{align}\label{eq:AisCQ}
 \widetilde{\bsW}
  &=
  A\bsz  
  =
  CQ\bsz
\end{align}
for a carefully chosen orthogonal matrix $Q$. 
We remark that, for ease of notation, we will write $f(\bsz)$,$f(\bsu)$ or $f(\hat{S}_1(t_1), \ldots, \hat{S}_n(t_m))$ to denote the function $f$ from above in terms of normal variates $\bsz$, uniform variates $\bsu$ or just the stock paths $\hat{S}_1(t_1), \ldots, \hat{S}_n(t_m)$.
In what follows we use the notation $Q_{\bullet k}$ to denote the $k$th column of $Q$ and $Q_{k \bullet}$ to denote the $k$th row.
The matrix $Q$ is chosen according to the following optimization problem:
\begin{align*}
 \underset{Q_{\bullet k}\in\mathbb{R}^{mn}}{\text{maximize}} \qquad& \text{variance contribution of $f$ due to $k$th dimension} \\
  \text{subject to} \qquad& \lVert Q_{\bullet k} \rVert=1, \\
		    \qquad& \langle Q^*_{\bullet j},Q_{\bullet k} \rangle=0, \quad j=1,\ldots,k-1,
\end{align*}
where $Q^*_{\bullet j}$ denotes the columns of $Q$ that have already been optimized in the previous iterations.
The algorithm is carried out iteratively for $k=1,2,\ldots,mn$ so that in the $k$th optimization step the objective function ensures that, given columns $Q^*_{\bullet j}$, $j=1,\ldots,k-1$ which have already been optimally determined in the previous iterations, the variance contribution due to the $k$th dimension is maximized while the constraints ensure orthogonality.
Being able to express the variance contribution for each component analytically for general payoff functions $f$ can be quite complicated.
Therefore, Imai and Tan \cite{IT2006} propose to approximate the objective function by linearizing it using a first-order Taylor expansion around a point $\bsz=\hat{\bsz}+\Delta \bsz$,
\begin{align*}
 f(\bsz)
  &\approx
  f(\hat{\bsz}) + \sum_{\ell=1}^{mn} \left.\frac{\partial f}{\partial z_{\ell}}\right|_{\bsz=\hat{\bsz}}\Delta z_{\ell}. 
\end{align*}
Using this expansion, the variance contributed due to the $k$th component is 
\begin{align*}
 \left( \left.\frac{\partial f}{\partial z_k}\right|_{\bsz=\hat{\bsz}}\right)^2.
\end{align*}
The expansion points are chosen as $\hat{\bsz}_k = (1,\ldots,1,0,\ldots,0)$, the vector with $k-1$ leading ones.
The optimization problem becomes
\begin{align*}
 \underset{Q_{\bullet k}\in\mathbb{R}^{mn}}{\text{maximize}} \qquad& \left( \left.\frac{\partial f}{\partial z_k}\right|_{\bsz=\hat{\bsz}_k}\right)^2 \\
  \text{subject to} \qquad& \lVert Q_{\bullet k} \rVert=1, \\
		    \qquad& \langle Q^*_{\bullet j},Q_{\bullet k} \rangle=0, \quad j=1,\ldots,k-1.
\end{align*}
In \cite{IT2006}, only the first $25$ or $50$ columns are computed.
The loss of efficiency is argued to be minimal, particularly when the underlying LT construction is effective at dimension reduction with the first few dimensions already capturing most of the variance. However, as this construction is just part of the startup cost, we construct the entire matrix in our numerical tests.

\bigskip

The original paper, \cite{IT2006}, considers a basket arithmetic average option (Asian basket) to illustrate the computational advantage of the LT method.
The payoff function inside the $\max$-function in this case is 
\begin{align*}
 f(\bsz)
  &=
  \sum_{i=1}^{mn} e^{\mu_i + \sum_{j=1}^{mn} a_{ij}z_j} - K
\end{align*}
where $K$ is the strike price, and
\begin{align*}
 \mu_i
  &=
  \log(w_{i_1,i_2}S_{i_1}(0)) + \left(r-\frac{\sigma_{i_1}^2}{2}\right)t_{i_2}
  ,
\end{align*}
where $w_{i_1,i_2}$ corresponds to the weight given to asset $i_1=\lfloor (i-1)/m\rfloor+1$ at time $i_2 = i-(i_1-1)m$.

From the first-order Taylor expansion we have
\begin{align*}
 f(\bsz)
  &\approx
  f(\hat{\bsz}) + \sum_{\ell=1}^{mn} \left(\sum_{i=1}^{mn} e^{\mu_i + \sum_{j=1}^{mn} a_{ij}\hat{z}_j} \right)\Delta z_{\ell}.
\end{align*}
The optimization problem becomes
\begin{align*}
 \underset{Q_{\bullet k}\in\mathbb{R}^{mn}}{\text{maximize}} \qquad& \left( \sum_{i=1}^{mn} e^{\mu_i + \sum_{j=1}^{k-1} \langle C_{i\bullet},Q^*_{\bullet j}\rangle}\langle C_{i\bullet},Q_{\bullet k}\rangle\right)^2 \\
  \text{subject to} \qquad& \lVert Q_{\bullet k} \rVert=1, \\
		    \qquad& \langle Q^*_{\bullet j},Q_{\bullet k} \rangle=0, \quad j=1,\ldots,k-1.
\end{align*}
For $k=1$, the solution is found as
\begin{align}
 Q^*_{\bullet 1}
 &=
 \pm\frac{\sum_{i=1}^{mn}e^{\mu_i}C_{i\bullet}}{\lVert \sum_{i=1}^{mn}e^{\mu_i}C_{i\bullet} \rVert}
 ,
 \label{eq:ITk1sol}
\end{align}
for subsequent $k$ the solution is
\begin{align*}
 Q^*_{\bullet k} 
  &=
  \pm\frac{ \sum_{i=1}^{mn}  e^{\mu_i + \sum_{j=1}^{k-1} \langle C_{i\bullet},Q^*_{\bullet j}\rangle}\langle C_{i\bullet},Q_{\bullet k}\rangle }{\lVert \sum_{i=1}^{mn}  e^{\mu_i + \sum_{j=1}^{k-1} \langle C_{i\bullet},Q^*_{\bullet j}\rangle}\langle C_{i\bullet},Q_{\bullet k}\rangle \rVert}
  ,
\end{align*}
where the sign can be chosen freely. The proof is in the appendix of \cite{IT2006}. 
Later in \RefSec{sect:LTCS} we revisit the sign choice.


\section{Conditional sampling according to Glasserman \& Staum}\label{sec:CondSampl}

Suppose we are interested in pricing an up-\&-out option on $n$ assets with payoff 
\begin{align*}
 g(S_1(t_1),\ldots,S_n(t_m))
 &=
 \max\left(f(S_1(t_1),\ldots,S_n(t_m)),0\right)\,\mathbb{I}\left\{\max_jS_1(t_j)<B\right\}
 ,
\end{align*}
where $\mathbb{I}$ is the indicator function.
Only the first asset is included inside the indicator function for notational and expositional ease. 
Our method can handle barriers on multiple assets by a straightforward extension; see also the remark after \RefAlg{alg:ltcs}.

The valuation can be done by sampling the assets using Monte Carlo and then calculating the payoff $g(S_1(t_1),\ldots,S_n(t_m))$.
When the probability of survival becomes low (e.g., if $B-S_1(t_0)$ is small) a lot of the generated paths will result in a knock-out.
This can make the variance among all paths quite large relative to the price.
To remedy this problem Glasserman and Staum \cite{GS2001} propose an estimator using conditional sampling.
Given $S_1(t_j)$ for $j=0,\ldots,i-1$, define 
\begin{align}\label{eq:likelihood}
	L_i
	=
	L_i(S_1(t_0), \ldots, S_1(t_{i-1}))
	&=
	\prod_{j=0}^{i-1} \mathbb{P}\left[ S_1(t_{j+1})<B | S_1(t_j) \right]
	.
\end{align}
The variable $L_i$ can be interpreted as the ``likelihood" of the path surviving $i$ steps.
Under the Black--Scholes framework, the probability in the product can be analytically determined as
\begin{align}\label{eq:Gammaj}
	\mathbb{P}\left[ S_1(t_{j+1})<B | S_1(t_j) \right]
	&=
	\underbrace{\Phi\left[\frac{\log\left(\frac{B}{S_1(t_j)}\right)-\left(r-\frac{\sigma_1^2}{2}\right)\Delta t}{\sigma_1\sqrt{\Delta t}}\right]}_{=\Gamma_j(S_1(t_j))}
	.
\end{align}
The authors \cite{GS2001} now sample the variable
\begin{align*}
	\hat{g}_1
	&=
	L_m \max\left(f(S_1(t_1),\ldots,S_n(t_m)),0\right)
	,
\end{align*}
where all $S_1(t_j)<B$, $j=1,\ldots,m$.
Therefore, there are no paths knocked out in the simulation, greatly reducing variance if such an event has a high probability.
An overview of this algorithm is presented in \RefAlg{alg:gscs}.
We recapitulate two theorems from \cite{GS2001}, which we prove using methods we will use again later.
\begin{theorem}
		The estimator based on conditionally sampling the asset path such that all $S_1(t_j) < B$, $j=1,\ldots,m$, by
		\begin{align}
		\hat{g}_1
		&=
		L_m \max\left(f(S_1(t_1),\ldots,S_n(t_m)),0\right)
		\label{eq:GSVhat}
	\end{align}
	 is unbiased. That is, 
	$\mathbb{E}[\hat{g}_1]=\mathbb{E}[g]$.
	\label{Thm:MCCSunbiased}
\end{theorem}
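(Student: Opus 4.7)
The plan is to recognise $\hat g_1$ as an importance sampling reweighting: conditional sampling replaces the law of the path $(S_1(t_1),\ldots,S_1(t_m))$ by its restriction to the survival event, and the accumulated product $L_m$ is precisely the Radon--Nikodym factor that compensates for that change of measure. So the proof reduces to showing that
\[
 \mathbb{E}[g]
 \;=\;
 \mathbb{E}\bigl[\max(f,0)\,\mathbb{I}\{\max_j S_1(t_j)<B\}\bigr]
 \;=\;
 \mathbb{E}_{Q}\!\left[L_m\,\max(f,0)\right],
\]
where $Q$ denotes the law under which the $S_1(t_j)$ are drawn conditionally on survival, while the joint law of the remaining asset paths $S_2,\ldots,S_n$ given the $S_1$-path is unchanged.

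First, I would condition on the full path of the first asset. Writing the joint density of all simulated asset values as $p(S_1\text{-path})\cdot p(S_2,\ldots,S_n\text{-path}\mid S_1\text{-path})$ (using the Markov property of the GBM dynamics and the fact that the barrier involves only $S_1$) allows the outer integral to be split into an expectation over the $S_1$-path of the conditional expectation of $\max(f,0)$ given that path. Because the conditional law of $(S_2,\ldots,S_n)$ given the $S_1$-path is identical under $P$ and $Q$, it suffices to show that for any bounded measurable functional $h$ of the $S_1$-path,
\[
 \mathbb{E}_P\!\left[h(S_1(t_1),\ldots,S_1(t_m))\,\mathbb{I}\{\max_j S_1(t_j)<B\}\right]
 \;=\;
 \mathbb{E}_Q\!\left[L_m\,h(S_1(t_1),\ldots,S_1(t_m))\right].
\]

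The core step is then an $m$-fold application of the tower property. Under $P$, the Markov transition density of $S_1(t_{j+1})$ given $S_1(t_j)$ is log-normal, and by the definition of $\Gamma_j$ in \eqref{eq:Gammaj} we have $\mathbb{P}[S_1(t_{j+1})<B\mid S_1(t_j)]=\Gamma_j(S_1(t_j))$, so the transition density under $Q$ is the $P$-transition density multiplied by $\mathbb{I}\{S_1(t_{j+1})<B\}/\Gamma_j(S_1(t_j))$. Peeling off the innermost expectation (over $S_1(t_m)$ given $S_1(t_{m-1})$) therefore produces a factor $\Gamma_{m-1}(S_1(t_{m-1}))$ on the $Q$ side, and after iterating from $j=m-1$ down to $j=0$ the accumulated factor is exactly $L_m=\prod_{j=0}^{m-1}\Gamma_j(S_1(t_j))$ from \eqref{eq:likelihood}, while every survival indicator gets absorbed into the restricted support of $Q$.

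The only subtlety is bookkeeping: one must make sure the correlations $\rho_{ij}$ between the Brownian drivers do not interfere with the argument. They do not, because the conditioning event involves only the marginal path of $S_1$ and the reweighting factors $\Gamma_j$ depend only on that marginal transition; the joint conditional law of the remaining Brownian increments given the $S_1$-path is the same under $P$ and $Q$. This is the point I would be most careful to spell out, since everything else is a routine iterated conditioning. Combining the two steps yields $\mathbb{E}[\hat g_1]=\mathbb{E}[g]$.
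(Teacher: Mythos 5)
Your argument is correct, but it is packaged differently from the paper's. The paper proves the identity directly in uniform coordinates: it reduces WLOG to $n=1$, writes $\mathbb{E}[\chi(\bsU)\max(f(\bsU),0)]$ as an $m$-fold integral over the restricted domain $\{u_{j+1}<\Gamma_j\}$, and applies the explicit change of variables $\hat{u}_{j+1}=u_{j+1}/\Gamma_j(u_1,\ldots,u_j)$, whose Jacobian is exactly $L_m=\prod_{j=0}^{m-1}\Gamma_j$; this is the same computation as in its later Theorem~3 for the LT scheme. You instead phrase it as an importance-sampling/Radon--Nikodym identity on the law of the $S_1$-path: the $Q$-transition kernel is the $P$-kernel restricted to $\{S_1(t_{j+1})<B\}$ and normalized by $\Gamma_j(S_1(t_j))$, so an $m$-fold tower-property peeling accumulates exactly $L_m$ and absorbs the survival indicators into the support of $Q$. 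These are two dressings of the same fact (the paper's rescaling of $u_{j+1}$ \emph{is} sampling from the conditional one-step law), but your route has a genuine advantage: by conditioning on the $S_1$-path and noting that the conditional law of $(S_2,\ldots,S_n)$ given that path is unchanged under the scheme, you actually justify the multi-asset, correlated case, which the paper disposes of with an unargued ``without loss of generality, one asset''; the price is a slightly heavier measure-theoretic setup, whereas the paper's change-of-variables computation is shorter and is reused almost verbatim for the LT conditional scheme later on. One small bookkeeping point to fix when writing it out fully: state explicitly that $S_1$ alone is Markov (its increments are driven by $W_1$ only), since you invoke its one-step transition kernel, and keep the indexing consistent with \eqref{eq:likelihood}, where the factor produced by integrating out $S_1(t_{j+1})$ is $\Gamma_j(S_1(t_j))$ for $j=0,\ldots,m-1$.
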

\begin{proof}
Without loss of generality, we only consider the case of one asset and set $r=0$. 
Denote by $\chi$ the function expressing the barrier condition
\begin{align*}
 \chi(\bsU)
 &=
 \begin{cases} 1 & \text{if $\max_{j \in\{1,\ldots,m\}} S(t_j)<B$,}
  \\
  0 &\text{otherwise,}
 \end{cases}
\end{align*}
where we explicitly denote the dependence of $\chi$ on $\bsU$, the uniform variates used to sample the asset paths.
To stress the dependency of $f$ and $\Gamma_j$ on $\bsU$ we will write this explicitly as well in terms of $\bsU$. 

We find
\begin{align*}
 \mathbb{E}\left[\chi(\bsU) \max(f(\bsU),0)\right]
 &= \int_{[0,1]^{m}} \chi(\bsu) \max(f(u_1,\ldots,u_{m}),0) \,\, d\bsu \\
 &= \int_0^{\Gamma_1}\cdots\int_0^{\Gamma_m(u_1,\ldots,u_{m-1})} \max(f(u_1,\ldots,u_{m}),0) \,\, du_m\cdots du_1
 .
\end{align*}
By using the change of variables $u_1/\Gamma_1=\hat{u}_1$ to $u_m/\Gamma_m(\hat{u}_1,\ldots,\hat{u}_{m-1})=\hat{u}_m$, with $\Gamma_j$ as defined in\RefEq{eq:Gammaj}, we obtain
\begin{align*}
 \mathbb{E}\left[\chi(\bsU) \max(f(\bsU),0)\right]
 &= \int_{[0,1]^{m}} \Gamma_1\prod_{j=2}^m\Gamma_j(\Gamma_1,\ldots,\Gamma_{j-1}(\hat{u}_1,\ldots,\hat{u}_{j-1}))\\ &\qquad\qquad\times\max(f(\Gamma_1\hat{u}_1,\ldots,\Gamma_m(\hat{u}_1,\ldots,\hat{u}_{m-1})u_{m}),0) \,\, d\hat{u}_1\cdots d\hat{u}_{m}\\
 &= \mathbb{E}\left[L_m\max(f(\hat{\bsU}),0)\right]
 ,
\end{align*}
which is what we needed to prove.
\end{proof}

Furthermore, it can be proven that the variance of the new estimator is at most that of the original one.

\begin{theorem}
	The estimator using\RefEq{eq:GSVhat} has reduced variance. That is, 
	$\var[ \hat{g}_1] \leq \var[g ] .$
	The inequality is strict if $\mathbb{P}\left[\max_jS_1(t_j)\geq B\right]>0$ and $\mathbb{E}[g]>0$, i.e., if there is any chance of knock-out and positive payoff.
\end{theorem}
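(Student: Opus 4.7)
My plan is to mirror the argument of \RefThm{Thm:MCCSunbiased} at the level of second moments. Since we already know $\mathbb{E}[\hat g_1]=\mathbb{E}[g]$, the variance inequality is equivalent to $\mathbb{E}[\hat g_1^2]\le\mathbb{E}[g^2]$, so the whole question reduces to comparing second moments.

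First, using $\chi^2=\chi$, I would write $\mathbb{E}[g^2]=\mathbb{E}[\chi(\bsU)\max(f(\bsU),0)^2]$. The change of variables $\hat u_j=u_j/\Gamma_j$ from the proof of \RefThm{Thm:MCCSunbiased} is independent of the integrand, so re-running it with the squared payoff gives the identity
\begin{equation*}
\mathbb{E}[g^2]=\mathbb{E}\left[L_m\max(f(\hat{\bsU}),0)^2\right],
\end{equation*}
while directly from the definition $\mathbb{E}[\hat g_1^2]=\mathbb{E}\left[L_m^2\max(f(\hat{\bsU}),0)^2\right]$. Since each $\Gamma_j\in[0,1]$ is a probability, we have $L_m\in[0,1]$ and hence $L_m^2\le L_m$ pointwise, giving $\mathbb{E}[\hat g_1^2]\le\mathbb{E}[g^2]$. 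Subtracting $\mathbb{E}[\hat g_1]^2=\mathbb{E}[g]^2$ yields $\var[\hat g_1]\le\var[g]$.

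For the strict inequality, the gap is $\mathbb{E}[L_m(1-L_m)\max(f(\hat{\bsU}),0)^2]$. Under Black--Scholes dynamics with finite barrier $B$, each $\Gamma_j$ lies strictly in $(0,1)$, so $0<L_m<1$ almost surely; the assumption $\mathbb{P}[\max_jS_1(t_j)\ge B]>0$ rules out the degenerate case $L_m\equiv 1$. The assumption $\mathbb{E}[g]>0$, combined with the identity $\mathbb{E}[g]=\mathbb{E}[L_m\max(f(\hat{\bsU}),0)]$ and $L_m>0$ a.s., forces $\max(f(\hat{\bsU}),0)>0$ on a set of positive measure. On that set $L_m(1-L_m)>0$ as well, so the gap is strictly positive.

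I do not anticipate a genuine obstacle: once the change-of-variables identity of \RefThm{Thm:MCCSunbiased} is recycled, the only pointwise inequality needed is $L_m^2\le L_m$. The mildly delicate point is the strictness bookkeeping --- verifying that $\mathbb{E}[g]>0$ transfers through the change of variables to yield positive-measure positivity of $\max(f(\hat{\bsU}),0)$ --- but this is immediate from the strict positivity of $L_m$ in the Black--Scholes setting.
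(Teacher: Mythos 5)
Your proof is correct and follows essentially the same route as the paper: reduce to second moments via unbiasedness, recycle the change of variables from \RefThm{Thm:MCCSunbiased} to trade one factor of $L_m$ for the indicator $\chi$, and conclude pointwise from $L_m\in[0,1]$ (you apply the substitution to $\mathbb{E}[g^2]$ whereas the paper reverses it on $\mathbb{E}[\hat g_1^2]$, which is immaterial). You additionally spell out the strictness bookkeeping via the gap $\mathbb{E}[L_m(1-L_m)\max(f(\hat{\bsU}),0)^2]$, which the paper's proof leaves implicit --- a welcome extra.
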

\begin{proof}
Because of \RefThm{Thm:MCCSunbiased} we only need to prove that $\mathbb{E}[\hat{g}_1^2]\leq\mathbb{E}[g^2]$.
We assume again that $r=0$ and $n=1$.
Using the same notation as in the proof of \RefThm{Thm:MCCSunbiased} we can write
\begin{align*}
	\mathbb{E}[\hat{g}_1^2]
	&= \int_{[0,1]^{m}} L_m^2 \max(f(\hat{\bsu}),0)^2 \,\, d\hat{\bsu}\\
	&= \int_{[0,1]^{m}} \chi^2(\bsu) \, L_m \max(f(\bsu),0)^2 \,\, d\bsu\\
	&\leq \int_{[0,1]^{m}}  \chi^2(\bsu)\max(f(\bsu),0)^2 \,\, d\bsu \\
	&= \mathbb{E}[g^2],
\end{align*}
where the inequality follows because $0\leq L_m$ and the last equality follows since $\chi^2(\bsu)=\chi(\bsu)$.
\end{proof}
While this conditional sampling scheme is very powerful, it cannot be used for methods such as LT.
Under this method, the multiplication by an othogonal matrix on $\bsz$ has the result that each component of $\tilde{\bsW}$ is, in general, given as a linear combination of the normal variates $z_1$ to $z_{mn}$.  
Therefore, one can not condition using the probabilities 
\begin{align*}
	&\mathbb{P}\left[ S_1(t_{j+1})<B | S_1(t_j) \right]
\end{align*}
since changing any variate $z_{j+1}$ such that $S_1(t_{j+1})<B$ will end up modifying $S_1(t_j)$ as well, again changing the probability
of knocking out conditional on the previous time step.
Therefore we propose an alternative conditional sampling scheme in the next section.

\begin{algorithm}
\caption{The Glasserman and Staum algorithm \cite{GS2001} for an up-\&-out option.}
\label{alg:gscs}
\begin{algorithmic}
\STATE Generate $N$ $mn$-dimensional uniform vectors $\bsu^{(i)}$, $i=1,\ldots,N$
\STATE Calculate the matrix $C$
\STATE Set $L=1$
\FOR{$i=1$ to $N$}
\FOR{$j=1$ to $m$}
\STATE $\hat{u}^{(i)}_{(j-1)n+1}=\mathbb{P}\left[ S_1(t_{j+1})<B | S_1(t_j) \right]u^{(i)}_{(j-1)n+1}$
\STATE Calculate $S_1(t_j),\ldots,S_n(t_j)$
\STATE Set $L=L \; \mathbb{P}\left[ S_1(t_{j+1})<B | S_1(t_j) \right]$
\ENDFOR
\STATE $g_i=L \max\{f(S_1(t_1),\ldots,S_n(t_m)),0\}$
\ENDFOR
\STATE $\hat{\mu}=\frac{1}{N}\sum_{i=1}^N g_i$ 
\STATE $\hat{\sigma}=\sqrt{\frac{1}{N(N-1)}\sum_{j=1}^N\left(g_i-\hat{\mu}\right)^2}$
\end{algorithmic}
\end{algorithm}

\section{Modified conditional sampling}\label{sect:LTCS}

We now derive an alternative conditional sampling scheme compatible with the LT algorithm discussed in \RefSec{sec:LTmethod}.
As an example we will again consider the up-\&-out option with payoff
\begin{align*}
 g(S_1(t_1),\ldots,S_n(t_m))
 &=
 \max\left(f(S_1(t_1),\ldots,S_n(t_m)),0\right)\,\mathbb{I}\left\{\max_jS_1(t_j)<B\right\}
 .
\end{align*}
For the option to stay alive we have the restriction
\begin{align}
 \sigma_1W_1(t_j) 
  &<
  \underbrace{\log\left(\frac{B}{S_1(0)}\right)-\left(r-\frac{\sigma_1^2}{2}\right)t_j}_{=b(t_j)}
  \qquad \text{for all }
  j=1,\ldots,m
  ,
  \label{eq:LTrestric}
\end{align}
where for a general covariance decomposition $A$
\begin{align*}
 \sigma_1W_1(t_j) 
  &=
  \left(A\Phi^{-1}(\bsu)\right)_j
\end{align*}
and the vector $\bsu$ consists of $mn$ uniform random variables.
The restriction\RefEq{eq:LTrestric} becomes
\begin{align*}
 \sum_{i=1}^{mn} a_{j,i} \Phi^{-1}(u_i)
  &<
  b(t_j)
  \qquad \text{for all }
  j=1,\ldots,m
  .
\end{align*}
We assume for the moment that $a_{j,1} > 0$ for $j=1,\ldots,m$, which is true if $c_{j,i} \ge 0$ for $j=1,\ldots,m$ and $i=1,\ldots,nm$, with $C$ the Cholesky factor in~\eqref{eq:cholfac}, and the sign of $Q_{\bullet,1}$ appropriately chosen (cf.\RefEqEq{eq:AisCQ}{eq:ITk1sol}).
Later we provide the solution for the general case.
Using this assumption we can modify the inequalities such that there is only one constraint on the first uniform variable assuming the others are given:
\begin{align*}
 \Phi^{-1}(u_1)
  &<
  \frac{b(t_j) - a_{j,2}\Phi^{-1}(u_2) - \ldots - a_{j,mn}\Phi^{-1}(u_{mn})}{a_{j,1}}
  \qquad \text{for all }
  j=1,\ldots,m
  .
\end{align*}
Equivalently we can write these $m$ conditions as the single condition
\begin{align*}
 u_1
  &<
  \underbrace{\Phi\left(\min_{j}\left[\frac{b(t_j) - a_{j,2}\Phi^{-1}(u_2) - \ldots - a_{j,mn}\Phi^{-1}(u_{mn})}{a_{j,1}}\right]\right)}_{=\Upsilon_u(u_2,\ldots,u_{mn})}
  .
\end{align*}
Our algorithm samples $u_1$ to $u_{mn}$, then calculates the upper bound $\Upsilon_u$ on $u_1$ using $u_2,\ldots,u_{mn}$, and then rescales $u_1$ to $\hat{u}_1=\Upsilon_u(u_2,\ldots,u_{mn})u_1$ in order to satisfy the barrier condition.
Our estimator is thus based on sampling
\begin{align*}
	\hat{g}_2
	&=
	\Upsilon_u(u_2,\ldots,u_{mn}) \max\left(f(\hat{u}_1,u_2,\ldots,u_{mn}),0\right)
	.
\end{align*}
For a down-\&-out option, a similar analysis can be done, leading to the condition
\begin{align*}
 u_1
  &>
  \underbrace{\Phi\left(\max_{j}\left[\frac{b(t_j) - a_{j,2}\Phi^{-1}(u_2) - \ldots - a_{j,mn}\Phi^{-1}(u_{mn})}{a_{j,1}}\right]\right)}_{=\Upsilon_d(u_2,\ldots,u_{mn})}
  .
\end{align*}
In case $C$ has negative elements, it is possible to have negative values for $a_{j,1}$.
Our algorithm can be modified to work around this problem:
suppose $a_{j,1}>0$ for $j\in\mathcal{P}\subseteq \{1,\ldots,m\}$ and $a_{j,1}<0$ for all other $j \notin \mathcal{P}$.
Then we find the conditions on $u_1$ as 
\begin{align*}
 \Phi^{-1}(u_1)
  &<
  \frac{b(t_j) - a_{j,2}\Phi^{-1}(u_2) - \ldots - a_{j,mn}\Phi^{-1}(u_{mn})}{a_{j,1}}
  \qquad \text{for all }
  j\in\mathcal{P}
\end{align*}
and
\begin{align*}
 \Phi^{-1}(u_1)
  &>
  \frac{b(t_j) - a_{j,2}\Phi^{-1}(u_2) - \ldots - a_{j,mn}\Phi^{-1}(u_{mn})}{a_{j,1}}
  \qquad \text{for all }
  j\notin\mathcal{P}
  .
\end{align*}
The restriction on $u_1$ now consists of an upper and a lower bound, denoted by $\Upsilon_u$ and $\Upsilon_d$ respectively, suppressing the dependence on $u_2$ to $u_{mn}$ for ease of notation.
The rescaled variable is then $\hat{u}_1=\Upsilon_d+(\Upsilon_u-\Upsilon_d)u_1$.
The estimator is then based on sampling
\begin{align*}
	\hat{g}_2
	&=
	\max(\Upsilon_u-\Upsilon_d,0)\max(f(\hat{u}_1,u_2,\ldots,u_{mn}),0)
	.
\end{align*}
We take the maximum of $\Upsilon_u-\Upsilon_d$ and $0$ as it could happen that $\Upsilon_u<\Upsilon_d$, and in this case the value of $\hat{g}_2$ should be taken as zero.
Examples of such payoffs are given in \RefSec{sec:dbba} and \RefSec{sec:sba2}.
An overview of this algorithm is presented in \RefAlg{alg:ltcs}.

To obtain a standard deviation we need to obtain a small number $M$ of randomized estimators which is standard practice in applying QMC.
\RefAlg{alg:ltcs} is generic in the sense that one can either use a lattice rule or lattice sequence with random shifting (see \cite{SJ94} for a general reference) or a digital net or digital sequence with digital shifting (see \cite{DP2010} for a general reference). 
For our numerical tests we will use either a digital sequence (being the Sobol' sequence with parameters from \cite{JK2008}) with digital shifting,  or a lattice sequence (with generating vector \textsf{exod8\_base2\_m13} from \cite{NUYWEB} constructed using the algorithm in~\cite{CKN2006}).

\begin{algorithm}
\caption{A modified conditional sampling algorithm}
\label{alg:ltcs}
\begin{algorithmic}
\STATE Generate $N$ $mn$-dimensional low-discrepancy points $\bsu^{(i)}$, $i=1,\ldots,N$
\STATE Generate $M$ random shifts $\bsDelta^{(i)}$, $i=1,\ldots,M$
\STATE Construct the matrix $A$ s.t. $AA'=\tilde{\Sigma}$
\FOR{$k=1$ to $M$}
\FOR{$i=1$ to $N$}
\STATE Randomize $\bsu^{(i)}$ using $\bsDelta^{(k)}$ and call this $\bsv^{(i,k)}$
\STATE Calculate the bounds $\Upsilon_d$ and $\Upsilon_u$ using $v_2^{(i,k)}$ to $v_{mn}^{(i,k)}$
\STATE $\hat{v}_1 = \Upsilon_d+(\Upsilon_u-\Upsilon_d) \, v_1^{(i,k)}$
\STATE $g_{k,i} = \max(\Upsilon_u-\Upsilon_d,0) \max\left(f(A\Phi^{-1}(\bsv^{(i,k)})),0\right)$
\ENDFOR
\ENDFOR
\STATE $\hat{\mu}=\frac{1}{MN}\sum_{k=1}^N\sum_{i=1}^M g_{k,i}$ 
\STATE $\hat{\sigma}=\sqrt{\frac{1}{M(M-1)}\sum_{i=1}^M\left(\frac{1}{N}\sum_{k=1}^N g_{k,i}-\hat{\mu}\right)^2}$
\end{algorithmic}
\end{algorithm}
So far we only considered barriers on one asset, whereas the payoff may depend on a basket. 
Our method can also deal with barriers on multiple stocks by calculating the conditions $\Upsilon_u$ and $\Upsilon_d$ for each asset individually, 
say $\Upsilon_d^{\ell}$ and $\Upsilon_u^{\ell}$ for asset $\ell$, and then taking $\max_{\ell}\Upsilon_d^{\ell}$ and $\min_{\ell}\Upsilon_u^{\ell}$ as the 
lower and upper bound respectively on $u_1$.
More complicated barrier conditions might require a root-finding method, see \RefSec{sec:root-finding}.

Similar as in the previous section we obtain the following results.
\begin{theorem}
	The estimator based on conditional sampling by
\begin{align}
	\hat{g}_2
	&=
	\max(\Upsilon_u-\Upsilon_d,0)\max(f(\hat{u}_1,u_2,\ldots,u_{mn}),0)
	,
	\label{Eq:estim2}
\end{align}
    where $\hat{u}_1=\Upsilon_d+(\Upsilon_u-\Upsilon_d)u_1$,
    is unbiased. That is,
	$\mathbb{E}[g] =\mathbb{E}[\hat{g}_2].$
	\label{Thm:LTCSunbiased}
\end{theorem}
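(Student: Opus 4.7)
The plan is to mimic the proof of \RefThm{Thm:MCCSunbiased}: write $\mathbb{E}[g]$ as an integral over $[0,1]^{mn}$ in which the barrier indicator $\chi$ cuts out a slab, then apply a one-dimensional affine change of variables in $u_1$ to flatten that slab onto $[0,1]$, producing exactly the Jacobian $\max(\Upsilon_u-\Upsilon_d,0)$ and the rescaled argument $\hat u_1$ that define $\hat g_2$.

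First I would set up the translation between the barrier event and a slab in $u_1$. By the derivation preceding the theorem, the event $\{\max_j S_1(t_j)<B\}$ is equivalent to the system of linear inequalities
\begin{align*}
 \sum_{i=1}^{mn} a_{j,i}\,\Phi^{-1}(u_i) < b(t_j),\qquad j=1,\ldots,m,
\end{align*}
and separating out the contribution of $u_1$ using the sign of $a_{j,1}$ gives, for each fixed $(u_2,\ldots,u_{mn})$, the single two-sided condition $\Upsilon_d<u_1<\Upsilon_u$, with $\Upsilon_d,\Upsilon_u$ as defined in the text (and $\Upsilon_d=0$ or $\Upsilon_u=1$ if the corresponding set of indices is empty). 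Thus, writing $\chi(\bsu)$ for the barrier indicator,
\begin{align*}
 \chi(\bsu) = \mathbb{I}\{\Upsilon_d(u_2,\ldots,u_{mn}) < u_1 < \Upsilon_u(u_2,\ldots,u_{mn})\}.
\end{align*}

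Next I would compute
\begin{align*}
 \mathbb{E}[g]
 &= \int_{[0,1]^{mn}} \chi(\bsu)\,\max(f(\bsu),0)\,d\bsu \\
 &= \int_{[0,1]^{mn-1}} \!\!\int_{\Upsilon_d}^{\Upsilon_u} \max(f(u_1,u_2,\ldots,u_{mn}),0)\,du_1\,du_2\cdots du_{mn},
\end{align*}
where the inner integral vanishes when $\Upsilon_u\le\Upsilon_d$. On the event $\Upsilon_u>\Upsilon_d$, substitute $u_1 = \Upsilon_d + (\Upsilon_u-\Upsilon_d)\,\hat u_1$ with $du_1 = (\Upsilon_u-\Upsilon_d)\,d\hat u_1$ so that $\hat u_1$ ranges over $[0,1]$; this is precisely the rescaling used to build $\hat g_2$. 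The inner integral becomes
\begin{align*}
 (\Upsilon_u-\Upsilon_d)\int_0^1 \max(f(\Upsilon_d+(\Upsilon_u-\Upsilon_d)\hat u_1,u_2,\ldots,u_{mn}),0)\,d\hat u_1.
\end{align*}
Incorporating the zero contribution when $\Upsilon_u\le\Upsilon_d$, the factor $(\Upsilon_u-\Upsilon_d)$ may be replaced by $\max(\Upsilon_u-\Upsilon_d,0)$ without changing the value, after which renaming $\hat u_1$ back to a dummy $u_1$ and recombining the integrals gives
\begin{align*}
 \mathbb{E}[g]
 &= \int_{[0,1]^{mn}} \max(\Upsilon_u-\Upsilon_d,0)\,\max(f(\hat u_1,u_2,\ldots,u_{mn}),0)\,du_1\cdots du_{mn}
  = \mathbb{E}[\hat g_2],
\end{align*}
as required.

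The only real subtlety is the bookkeeping around the sign pattern of the first column of $A$: one must check that the two-sided bound $\Upsilon_d<u_1<\Upsilon_u$ correctly captures the barrier event regardless of how the indices split between $\mathcal{P}$ and its complement, and that when the induced interval is empty both $\chi$ and the Jacobian $\max(\Upsilon_u-\Upsilon_d,0)$ vanish simultaneously so no contribution is lost. Everything else is a routine Fubini argument combined with a single affine change of variables, exactly parallel to the proof of \RefThm{Thm:MCCSunbiased}.
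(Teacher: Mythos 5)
Your proof is correct and follows essentially the same route as the paper: write $\mathbb{E}[g]$ as an integral over $[0,1]^{mn}$ with the barrier indicator cutting out the slab $[\Upsilon_d,\Upsilon_u]$ in $u_1$, then apply the affine change of variables $u_1=\Upsilon_d+(\Upsilon_u-\Upsilon_d)\hat u_1$ to produce the factor $\max(\Upsilon_u-\Upsilon_d,0)$ and the rescaled argument. Your explicit treatment of the degenerate case $\Upsilon_u\le\Upsilon_d$ is a welcome extra bit of care, but it does not change the argument.
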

\begin{proof}
Denote by $\chi$ the function
\begin{align*}
 \chi(\bsU)
 &=
 \begin{cases} 1 & \text{if $\max_{i\in\{1,\ldots,m\}} S(t_i)<B$,}
  \\
  0 &\text{otherwise.}
 \end{cases}
\end{align*}
Without loss of generality we assume $r=0$.
The value of the contract is then given as $\mathbb{E}\left[\chi(\bsU)\max(f(\bsU),0)\right]$ where we write out the dependencies explicitly in terms of $\bsU$.
We now want to prove that this equals $\mathbb{E}\left[\max(\Upsilon_u-\Upsilon_d,0)\max(f(\hat{\bsU}),0)\right]$. 
\begin{align*}
 \mathbb{E}\left[\chi(\bsU) \max(f(\bsU),0)\right]
 &= \int_{[0,1]^{mn}} \chi(\bsu) \max(f(u_1,\ldots,u_{mn}),0) \,\, d\bsu \\
 &= \int_{[0,1]^{mn-1}} \int_{[\Upsilon_d,\Upsilon_u]} \max(f(u_1,\ldots,u_{mn}),0) \,\, du_1\cdots du_{mn}
 .
\end{align*}
By using the change of variables $(u_1-\Upsilon_d)/(\Upsilon_u-\Upsilon_d)=\hat{u}_1$ and $u_i=\hat{u_i}$ for $i=2,\ldots,mn$ we find (using the notation $\hat{\bsu}=(\hat{u}_1,u_2,\ldots,u_{mn})$) that
\begin{align*}
 &\mathbb{E}\left[\chi(\bsU)\max(f(\bsU),0)\right]
 \\
 &\qquad= \int_{[0,1]^{mn-1}} \int_{[0,1]} \max(\Upsilon_u-\Upsilon_d,0)\max(f(\Upsilon_d+(\Upsilon_u-\Upsilon_d)\hat{u}_1,\ldots,u_{mn}),0) \,\, d\hat{u}_1\cdots du_{mn}
 \\
 &\qquad= \int_{[0,1]^{mn}} \max(\Upsilon_u-\Upsilon_d,0)\max(f(\Upsilon_d+(\Upsilon_u-\Upsilon_d)\hat{u}_1,\ldots,u_{mn}),0) \,\, d\hat{\bsu}
 \\
 &\qquad= \mathbb{E}\left[\max(\Upsilon_u-\Upsilon_d,0)\max(f(\hat{\bsU}),0)\right]
 ,
\end{align*}
which is what we needed to prove.
\end{proof}

\begin{theorem}
	When using regular Monte Carlo, the estimator defined in\RefEq{Eq:estim2} for an up-\&-out option has reduced variance. That is,
	$\var\left[ \hat{g}_2\right] \leq \var\left[g \right].$
	The inequality is strict if $\mathbb{P}\left[\max_jS_1(t_j)\geq B\right]>0$ and $\mathbb{E}\left[g\right]>0$, i.e., if there is any chance of knock-out and positive payoff.
\end{theorem}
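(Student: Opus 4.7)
The plan is to mirror the structure of the analogous variance proof in \RefSec{sec:CondSampl}: since \RefThm{Thm:LTCSunbiased} already gives $\mathbb{E}[\hat{g}_2]=\mathbb{E}[g]$, it is enough to prove $\mathbb{E}[\hat{g}_2^2]\le\mathbb{E}[g^2]$. For the up-\&-out case the assumption $a_{j,1}>0$ holds for all $j$, so the lower bound degenerates to $\Upsilon_d=0$ and $\Upsilon_u\in[0,1]$ (it is the value of the standard normal CDF). The estimator reduces to $\hat{g}_2=\Upsilon_u\max(f(\Upsilon_u u_1,u_2,\ldots,u_{mn}),0)$.

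Next I would expand $\mathbb{E}[\hat{g}_2^2]$ as an $mn$-dimensional integral over the unit cube and perform the reverse change of variables $v_1=\Upsilon_u(u_2,\ldots,u_{mn})\,u_1$ used in the proof of \RefThm{Thm:LTCSunbiased}. This pulls out one factor of $\Upsilon_u$ (the Jacobian of the inverse substitution cancels one of the two $\Upsilon_u$ factors coming from $\hat{g}_2^2$), and restricts the $v_1$ integration to $[0,\Upsilon_u]$, which is precisely the region where the barrier indicator $\chi$ equals one. Thus
\begin{align*}
 \mathbb{E}[\hat{g}_2^2]
 &= \int_{[0,1]^{mn}} \Upsilon_u(u_2,\ldots,u_{mn})\,\chi(\bsu)\,\max(f(\bsu),0)^2 \, d\bsu.
\end{align*}

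The final step is the pointwise bound $\Upsilon_u\le 1$, together with $\chi^2=\chi$, which yields
\begin{align*}
 \mathbb{E}[\hat{g}_2^2] \le \int_{[0,1]^{mn}} \chi(\bsu)\,\max(f(\bsu),0)^2 \, d\bsu = \mathbb{E}[g^2],
\end{align*}
giving the desired variance inequality. For the strict inequality, note that $\Upsilon_u<1$ exactly on the set of $(u_2,\ldots,u_{mn})$ for which there exists a normal coordinate $z_1$ causing knock-out, i.e., the set whose measure equals $\mathbb{P}[\max_j S_1(t_j)\ge B]$. If this probability is positive and $\mathbb{E}[g]>0$, then $\Upsilon_u\max(f,0)^2<\max(f,0)^2$ on a set of positive measure, producing a strict inequality.

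I expect no serious obstacle here; the main subtlety is simply keeping the Jacobian bookkeeping straight, namely that squaring $\hat{g}_2$ contributes $\Upsilon_u^2$ while the substitution supplies a factor $\Upsilon_u^{-1}$, leaving one net factor of $\Upsilon_u\le 1$ to produce the bound. The argument is essentially identical in spirit to the Glasserman--Staum proof with the likelihood $L_m$ replaced by the single scalar $\Upsilon_u$, which is natural since the LT-based conditional sampling collapses all the per-step survival constraints into a single constraint on $u_1$.
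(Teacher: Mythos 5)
Your strategy is exactly the paper's: use \RefThm{Thm:LTCSunbiased} to reduce the claim to $\mathbb{E}[\hat{g}_2^2]\le\mathbb{E}[g^2]$, undo the change of variables from the unbiasedness proof so that one scaling factor is absorbed by the Jacobian while the barrier indicator $\chi$ reappears, and then conclude from the pointwise bound that the remaining scaling factor is at most $1$ together with $\chi^2=\chi$. Your Jacobian bookkeeping (two factors from squaring, one returned by the substitution) is precisely the computation in the paper's proof.

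One step, however, is not justified: for an up-\&-out option it is \emph{not} automatic that $a_{j,1}>0$ for all $j=1,\ldots,m$, so you cannot in general set $\Upsilon_d=0$. The paper assumes positivity of the first column only ``for the moment'' and then explicitly handles mixed signs, which is the very reason the estimator in\RefEq{Eq:estim2} carries the factor $\max(\Upsilon_u-\Upsilon_d,0)$ and a genuine lower bound $\Upsilon_d$ (see also the mixed-sign example of \RefSec{sec:sba2}). As written your argument therefore proves only the one-sided special case. The repair is immediate and keeps your structure: substitute $\hat{u}_1=\Upsilon_d+(\Upsilon_u-\Upsilon_d)u_1$ (on the event $\Upsilon_u\le\Upsilon_d$ the sample contributes $0$), which gives $\mathbb{E}[\hat{g}_2^2]=\int_{[0,1]^{mn}}\chi(\bsu)\,\max(\Upsilon_u-\Upsilon_d,0)\,\max(f(\bsu),0)^2\,d\bsu\le\mathbb{E}[g^2]$, exactly the paper's chain of equalities. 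A minor further point: your strictness argument misidentifies the set where $\Upsilon_u<1$; for a finite barrier and $a_{j,1}>0$ one has $\Upsilon_u<1$ for essentially all $(u_2,\ldots,u_{mn})$, and the measure of that set is not $\mathbb{P}[\max_j S_1(t_j)\ge B]$. What is actually needed is that the surviving positive-payoff region has positive measure (from $\mathbb{E}[g]>0$) and that the scaling factor is strictly below $1$ on a positive-measure part of it; the paper itself does not spell out the strict case, so your attempt goes beyond it, but the characterization as stated is inaccurate.
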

\begin{proof}
Because of \RefThm{Thm:LTCSunbiased} we only need to prove that $\mathbb{E}[\hat{g}_2^2]\leq\mathbb{E}[g_2^2]$.
Using the same notation as in the proof of \RefThm{Thm:LTCSunbiased} we can write
\begin{align*}
	\mathbb{E}[\hat{g}_2^2]
	&= \int_{[0,1]^{mn}} \max(\Upsilon_u-\Upsilon_d,0)^2 \max(f(\Upsilon_d+(\Upsilon_u-\Upsilon_d)\hat{u}_1,\ldots,u_{mn}),0)^2 \,\, d\hat{u}_1\cdots du_{mn}\\
	&= \int_{[0,1]^{mn}} \chi^2(\bsU)\max(\Upsilon_u-\Upsilon_d,0) \max(f(u_1,\ldots,u_{mn}),0)^2 \,\, du_1\cdots du_{mn}\\
	&\leq \int_{[0,1]^{mn}}  \chi^2(\bsU)\max(f(u_1,\ldots,u_{mn}),0)^2 \,\, du_1\cdots du_{mn} \\
	&= \mathbb{E}[g_2^2],
\end{align*}
where the inequality follows because $\max(\Upsilon_u - \Upsilon_d, 0) \leq 1$ and the last equality since $\chi^2(\bsU)=\chi(\bsU)$.
\end{proof}

For our case, when using quasi-Monte Carlo, we have the following theorem on variance.
\begin{theorem}
	When using a randomly shifted quasi-Monte Carlo method, the estimator defined in\RefEq{Eq:estim2} for an up-\&-out option has reduced variance. That is,
	$\var[ \hat{g}_2] \leq \var[g ].$
	The inequality is strict if $\mathbb{P}\left[\max_jS_1(t_j)\geq B\right]>0$ and $\mathbb{E}\left[f\right]>0$, i.e., if there is any chance of knock-out and positive payoff. 
	\label{Thm:RedVar}
\end{theorem}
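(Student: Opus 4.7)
My plan is to leverage the Monte Carlo variance inequality from the preceding theorem together with the standard Fourier (resp.\ Walsh) analysis of randomly shifted (resp.\ digitally shifted) QMC point sets. For a randomly shifted lattice sequence I would expand both integrands in the complex-exponential basis $\{e^{2\pi\mathrm{i}\langle\boldsymbol{h},\bsu\rangle}\}_{\boldsymbol{h}\in\Z^{mn}}$, and for the digitally shifted Sobol' sequence in the Walsh basis. A standard calculation then gives the well-known variance-over-shift identity
\begin{align*}
\var_{\bsDelta}\bigl[Q_{N}(f;\bsDelta)\bigr]
&=
\sum_{\boldsymbol{h}\in L^{\perp}\setminus\{\bszero\}}|\hat{f}(\boldsymbol{h})|^{2},
\end{align*}
where $L^{\perp}$ denotes the dual lattice (resp.\ dual digital net), because the underlying point set annihilates every character except those indexed by $L^{\perp}$. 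Since $L^{\perp}\setminus\{\bszero\}\subseteq\Z^{mn}\setminus\{\bszero\}$, Parseval's identity immediately yields $\var_{\bsDelta}[Q_{N}(f;\bsDelta)]\le\var[f(\bsU)]$ for $\bsU$ uniform on $[0,1]^{mn}$.

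Next, I would apply this bound with $f=\hat{g}_{2}$ and chain it with the Monte Carlo inequality $\var[\hat{g}_{2}(\bsU)]\le\var[g(\bsU)]$ established in the previous theorem (whose argument was $\mathbb{E}[\hat{g}_{2}^{2}]\le\mathbb{E}[g^{2}]$ via the change of variables $\hat{u}_{1}=\Upsilon_{d}+(\Upsilon_{u}-\Upsilon_{d})u_{1}$ combined with unbiasedness). This gives the required inequality $\var[\hat{g}_{2}]\le\var[g]$. The strict inequality under $\mathbb{P}[\max_{j}S_{1}(t_{j})\ge B]>0$ and $\mathbb{E}[f]>0$ follows by tracking strictness through both steps: the Monte Carlo step is already strict in this regime, because $\max(\Upsilon_{u}-\Upsilon_{d},0)<1$ on a set of positive measure inside the survival region, and this strictness is preserved by the Parseval truncation.

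The main obstacle I expect lies in handling the two distinct randomisations uniformly, since the lattice sequence uses modular addition while the Sobol' sequence uses digital XOR; the Fourier-style identity has to be established twice, once with trigonometric characters and once with Walsh characters, before the common Parseval-based comparison can be invoked. A secondary care point is the interpretation of $\var[g]$: if it is read as the integrand variance $\var[g(\bsU)]$, the chain above suffices; if it is instead read as the full variance $\var_{\bsDelta}[Q_{N}(g;\bsDelta)]$ of the shifted QMC estimator itself, one would additionally need to compare the two sums of squared coefficients restricted to $L^{\perp}\setminus\{\bszero\}$, which does not follow a priori from the $L^{2}$-level inequality and would require extra structure on the change of variables defining $\hat{g}_{2}$.
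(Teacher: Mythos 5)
Your proposal is correct in substance but takes a genuinely different route from the paper. The paper's proof uses no character expansions at all: it relies only on the fact that $\{\bsu+\bsDelta\}$ is uniformly distributed on $[0,1)^{mn}$ when $\bsDelta$ is, computes the mean and second moment of the shifted estimator directly, and then reuses exactly the pointwise ingredients of the Monte Carlo theorem ($\chi^2=\chi$, the change of variables $\hat{u}_1=\Upsilon_d+(\Upsilon_u-\Upsilon_d)u_1$, and $\max(\Upsilon_u-\Upsilon_d,0)\le 1$) to conclude that the second moment of the randomized estimator is at most $\mathbb{E}[g^2]$; combined with unbiasedness this gives the claim, and note that the comparison target there is indeed the plain integrand variance $\var[g(\bsU)]$, so the first of your two readings of $\var[g]$ is the intended one and your ``secondary care point'' is moot. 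You instead invoke the dual-lattice/dual-net Parseval identities (essentially the Lemieux--L'\'Ecuyer variance results that the paper cites but does not actually use) to get $\var_{\bsDelta}[Q_N(\hat{g}_2;\bsDelta)]\le\var[\hat{g}_2(\bsU)]$ and then chain with the Monte Carlo theorem; this is clean and modular, and it buys you a reusable black-box bound, but two remarks are in order. First, the exact identity over $L^{\perp}$ requires the $N$ points to form a lattice (respectively a digital net), which holds here only because $N$ is a power of the base of the sequence; for arbitrary $N$ you can replace the whole character argument by Cauchy--Schwarz on the covariances of $\hat{g}_2(\{\bsu^{(i)}+\bsDelta\})$, which gives the same bound $\var_{\bsDelta}\bigl[\tfrac1N\sum_i \hat{g}_2(\{\bsu^{(i)}+\bsDelta\})\bigr]\le\var[\hat{g}_2(\bsU)]$ for any point set and is closer in spirit to the paper's elementary treatment (which also handles both shift types with one argument). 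Second, for strictness you only need the middle (Monte Carlo) inequality to be strict, since the chain then carries it through; your remark that strictness is ``preserved by the Parseval truncation'' is unnecessary, and your level of detail on strictness matches the paper's, which asserts it under the same hypotheses without spelling it out.
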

\begin{proof}
When using lattice rules or lattice sequences the standard randomization technique is to use random shifting \cite{SJ94}.
For digital nets and digital sequences a standard technique is to use random digital shifting \cite{DP2010}.
Results on variance for both these methods can be found in \cite{LL2000} and \cite{LL2002b}.
As random shifting is easier to understand we will use this in the proof, but the proof can be trivially modified for random digital shifting.
In what follows we will use the notation $\{x\}=x-\lfloor x\rfloor$ to denote the fractional part of $x$.
When using $M \ge 1$ random shifts the estimator becomes
\begin{align*} 
  \frac{1}{M} \sum_{k=1}^M
  \frac{1}{N} \sum_{i=1}^N
  \max(\Upsilon_u-\Upsilon_d,0) 
  \max\left(f\left(\Upsilon_d+(\Upsilon_u-\Upsilon_d) v^{(i,k)}_1, v^{(i,k)}_2, \ldots, v^{(i,k)}_{mn}\right), 0 \right),
\end{align*}
where the bounds $\Upsilon_u$ and $\Upsilon_d$ are calculated based upon the shifted points and these shifted points are obtained as
\begin{align*}
  v^{(i,k)}_j
  &=
  \{ u^{(i)}_j + \Delta^{(k)}_j \}
  &\text{for } j=1,\ldots,nm,
\end{align*}
with the random shift vectors $\bsDelta^{(j)}$ i.i.d.\ uniform variables over $[0,1]^{nm}$. 
For ease of notation, write
\begin{align*}
	F(\{\bsu^{(i)}+\bsDelta^{(k)}\})
	&=
	f\left(\Upsilon_d+(\Upsilon_u-\Upsilon_d) v^{(i,k)}_1, v^{(i,k)}_2, \ldots, v^{(i,k)}_{mn}\right)
	.
\end{align*}
We will keep suppressing the arguments of the bounds $\Upsilon_d$ and $\Upsilon_u$ for ease of notation, but as a reminder we note that these depend on the integration variables as well.  
By observing that the integration of the linear map $\bsDelta \mapsto \{\bsu+\bsDelta\}$ over $[0,1)^{mn}$ is the same as integrating $\bsDelta$ itself over $[0,1)^{mn}$, we can easily prove that the above estimator is equivalent to $\mathbb{E}[\hat{g}_2]$. Indeed,
\begin{align*} 
	&
	\mathbb{E}_{\Delta}\left[\frac{1}{M}\sum_{k=1}^M\frac{1}{N}\sum_{i=1}^N\max(\Upsilon_u-\Upsilon_d,0)\max(F(\{\bsu^{(i)} +\bsDelta^{(k)}\}),0)\right] \\
	&\qquad=
	\frac{1}{M}\sum_{k=1}^M\frac{1}{N}\sum_{i=1}^N\int_{[0,1)^{mn}} \max(\Upsilon_u-\Upsilon_d,0)\max(F(\{\bsu^{(i)} +\bsDelta^{(k)} \}),0) \, d\bsDelta^{(k)} \\
	&\qquad=
	\frac{1}{M}\sum_{k=1}^M\frac{1}{N}\sum_{i=1}^N\int_{[0,1)^{mn}} \max(\Upsilon_u-\Upsilon_d,0)\max(F(\bsDelta^{(k)} ),0) \, d\bsDelta^{(j)}  \\
	&\qquad=
	\frac{1}{M}\sum_{k=1}^M\frac{1}{N}\sum_{i=1}^N \mathbb{E}[\hat{g}_2] \\
	&\qquad=
	\mathbb{E}[\hat{g}_2].
\end{align*} 
Similarly, we find for the second moment
\begin{align*} 
	&
	\mathbb{E}_{\Delta}\left[\left(\frac{1}{M}\sum_{k=1}^M\frac{1}{N}\sum_{i=1}^N\max(\Upsilon_u-\Upsilon_d,0)\max(F(\{\bsu^{(i)} +\bsDelta^{(k)}\}),0)\right)^2\right] \\
	&\qquad=
	\int_{[0,1)^{mn}}\cdots\int_{[0,1)^{mn}} \left(\frac{1}{M}\sum_{k=1}^M\frac{1}{N}\sum_{i=1}^N\max(\Upsilon_u-\Upsilon_d,0)\max(F(\{\bsu^{(i)} +\bsDelta^{(k)}\}),0)\right)^2 d\bsDelta^{(1)}\cdots d\bsDelta^{(M)} \\
	&\qquad=
	\int_{[0,1)^{mn}}\cdots\int_{[0,1)^{mn}} \left(\frac{1}{M}\sum_{k=1}^M\frac{1}{N}\sum_{i=1}^N\max(\Upsilon_u-\Upsilon_d,0)\max(F( \bsDelta^{(k)}),0)\right)^2 d\bsDelta^{(1)}\cdots d\bsDelta^{(M)} \\
	&\qquad=
	\frac{1}{M^2 N^2}\sum_{j=1}^{M^2N^2}  \int_{[0,1)^{mn}} \max(\Upsilon_u-\Upsilon_d,0)^2\max(F( \bsDelta),0)^2 \, d \bsDelta \\
	&\qquad=
	\frac{1}{M^2 N^2}\sum_{j=1}^{M^2N^2}  \int_{[0,1)^{mn}} \max(\Upsilon_u-\Upsilon_d,0) \chi^2(\bsDelta)\max(f(\bsDelta),0)^2 \, d\bsDelta \\
	&\qquad\leq
	\frac{1}{M^2 N^2}\sum_{j=1}^{M^2N^2}  \int_{[0,1)^{mn}} \chi^2(\bsDelta)\max(f(\bsDelta),0)^2 \, d\bsDelta,
\end{align*} 
which concludes the proof.
\end{proof}

\begin{remark}
We only show that our conditional sampling scheme has reduced variance compared to the unconditional method.
We do not show that it has reduced variance compared to the Glasserman \& Staum conditional sampling scheme.
In fact, this statement will not always be true, as will be illustrated by a numerical example in the next section.
The scenario in which this happens however is one where an asset is forced to stay within two very tight barriers, which is an unrealistic choice of parameters.
We are not aware of a theoretical quantification when either method will outperform the other one.

Heuristically, what our conditional scheme really does is shifting the asset paths entirely to satisfy the barrier condition.
Assuming no mixed signs in the first column of $A$, this will never be a problem for a single barrier.
When the signs are mixed or when there are two barriers, it can occur that the path can not be shifted without hitting a barrier.
In fact, looking at the definition of $\Upsilon_d$ and $\Upsilon_u$, in this case we can always choose a barrier such that $\Upsilon_d>\Upsilon_u$.
The Glasserman \& Staum scheme on the other hand samples the asset path conditionally in each time step using a different random variate. 
This is not possible for a general path construction, like the LT method, since all the random variates influence the asset in each time step,
see also the discussion at the end of \RefSec{sec:CondSampl}.
The numerical results show that the inability to find suitable bounds on $z_1$ does not necessarily mean our method performs worse than the Glasserman \& Staum conditional sampling scheme. 
\end{remark}


\section{Numerical Results}\label{sec:numres}

In the numerical examples, unless specified otherwise, we use the Sobol' sequence with parameters from \cite{JK2008} and digital shifting \cite{DP2010} and we calculate all the columns of $Q$ for the LT construction.
We use this same setup also for the examples of the two extensions in \RefSecSec{sec:knock-in}{sec:root-finding}.
Other QMC point sets can be used as well.
In \RefSec{ex:sbba} we used both the Sobol' sequence and a lattice sequence to illustrate that the choice of QMC point set is more or less arbitrary (as long as it is of good quality).
For more information on lattice rules, see \cite{CKN2006,NUYWEB}. 

\subsection{Single barrier Asian basket}\label{ex:sbba}
Consider an Asian barrier option on four assets:
\begin{align*}
 g 
  &= 
  \max\left(\frac{1}{4\times130}\sum_{i=1}^{4}\sum_{j=1}^{130} S_i(t_j)-K,0\right)\mathbb{I}\left\{\max_{j=1,\ldots,130} S_1(t_j)< B\right\}
  .
\end{align*}
We will consider the valuation of this option under several model parameters.
The fixed parameters are $S_i(0)=100$ for $i=1,\ldots,4$, $\sigma_2=\sigma_3=25\%$, $\sigma_4=35\%$, $r=5\%$ and $T=6$ months.
Here we have taken $m=130$.
We consider two correlation matrices:
\begin{align*}
	P_1 
	&= 
	\begin{pmatrix} 1 & 0.6 & 0.6 & 0.6 \\ 0.6 & 1 & 0.6 & 0.6 \\ 0.6 & 0.6 & 1 & 0.6 \\ 0.6 & 0.6 & 0.6 & 1 \end{pmatrix}
	&\text{and}&&
	P_2 
	&= 
	\begin{pmatrix} 1 & -0.5 & 0.6 & 0.2 \\ -0.5 & 1 & -0.2 & -0.1 \\ 0.6 & -0.2 & 1 & 0.25 \\ 0.2 & -0.1 & 0.25 & 1 \end{pmatrix}
	.
\end{align*}
In what follows, we denote the LT method using a low-discrepancy point set by QMC+LT, and the conditional sampling scheme we derived for the LT method using such a point set by QMC+LT+CS. 
The conditional sampling scheme for Monte Carlo will be denoted by MC+CS. 
We report the improvement of the standard deviations of the value estimates of these methods to that of the conditional sampling scheme for regular Monte Carlo (MC+CS) for various choices of 
$\sigma_1$, $K$ and $B$ in \RefTab{table:Ex4} using both the Sobol' sequence and a lattice sequence. 
We see similar results for both QMC point generators.

The first example takes $\sigma_1=25\%$, $K=70$ and the extreme choice of $B=10000$. Clearly, the barrier will almost never be hit, and we should obtain the same result for QMC+LT as for QMC+LT+CS. 
This example also shows the power of the QMC+LT method compared to using MC+CS, as the standard deviation under the former method is about 26 times smaller. 

We then test three different cases for $\sigma_1=25\%$ and again three different cases for $\sigma_1=55\%$.
In all of these cases the condition of hitting the barrier comes into play and we can see the advantage of conditional sampling clearly.
The QMC+LT method performs worse than MC+CS on one occasion but delivers good results for all other cases.
Our new QMC+LT+CS method always outperforms the QMC+LT method and outperforms the MC+CS method by factors ranging from $1.6$ to $8$ with respect to the standard deviation.
%
%
%
%
%

In \RefFig{fig:example5} we show the convergence graphs for the second and fifth examples. 
The convergence factors $\alpha$, which we obtained from the linear regression $\log(\sigma)=\beta-\alpha\log(N)$, are shown in \RefTab{tab:ex5conv}. We see that in two out of four cases our conditional scheme improves the convergence, and in the other cases the convergence is about the same as the original QMC+LT method. 
Overall, the results of our conditional sampling scheme QMC+LT+CS are very satisfactory. 

\begin{table}
\begin{center}
\begin{tabular}{ cA c|cA c}
\multicolumn{6}{c}{Sobol' sequence}\\
\hline
$(P, \sigma_1, B, K)$ & QMC+LT+CS & QMC+LT & $(P, \sigma_1, B, K)$ & QMC+LT+CS & QMC+LT  \\
\hline  
$(P_1,0.25,10000,70)$ & $2682$\% & $2682$\% &
$(P_2,0.25,10000,70)$ & $2575$\% & $2575$\%  \\[1mm]
$(P_1,0.25,125,70)$ & $786$\% & $154$\% &
$(P_2,0.25,125,70)$ & $621$\% & $197$\% \\
$(P_1,0.25,105,70)$ & $368$\% & $128$\% &
$(P_2,0.25,105,70)$ & $414$\% & $170$\% \\
$(P_1,0.25,110,100)$ & $231$\% & $120$\% &
$(P_2,0.25,110,100)$ & $304$\% & $165$\% \\[1mm]
$(P_1,0.55,105,70)$ & $287$\% & $110$\% &
$(P_2,0.55,105,70)$ & $234$\% & $144$\% \\
$(P_1,0.55,105,90)$ & $190$\% & $88$\% &
$(P_2,0.55,105,90)$ & $157$\% & $105$\% \\
$(P_1,0.55,150,110)$ & $278$\% & $135$\% &
$(P_2,0.55,150,110)$ & $269$\% & $126$\%  \\
\multicolumn{6}{c}{ }\\
\multicolumn{6}{c}{Lattice sequence}\\
\hline
$(P, \sigma_1, B, K)$ & QMC+LT+CS & QMC+LT & $(P, \sigma_1, B, K)$ & QMC+LT+CS & QMC+LT  \\
\hline  
$(P_1,0.25,10000,70)$ & $3457$\% & $3457$\% & $(P_2,0.25,10000,70)$ & $3318$\% & $3318$\% \\[1mm] 
$(P_1,0.25,125,70)$ & $536$\% & $143$\% & $(P_2,0.25,125,70)$ & $594$\% & $173$\% \\ 
$(P_1,0.25,105,70)$ & $358$\% & $144$\% & $(P_2,0.25,105,70)$ & $319$\% & $149$\% \\ 
$(P_1,0.25,110,100)$ & $190$\% & $121$\% & $(P_2,0.25,110,100)$ & $255$\% & $187$\% \\ 
$(P_1,0.55,105,70)$ & $250$\% & $124$\% & $(P_2,0.55,105,70)$ & $273$\% & $110$\% \\[1mm] 
$(P_1,0.55,105,90)$ & $150$\% & $103$\% & $(P_2,0.55,105,90)$ & $177$\% & $105$\% \\ 
$(P_1,0.55,150,110)$ & $341$\% & $209$\% & $(P_2,0.55,150,110)$ & $267$\% & $139$\% \\                        
\end{tabular}
\caption{Single barrier Asian basket. The reported numbers are the standard deviations of the MC+CS method divided by those of the QMC+LT and QMC+LT+CS methods. The MC+CS method uses $163840$ samples, while the QMC+LT and QMC+LT+CS methods use $4096$ samples and $40$ independent shifts.
	 }
\label{table:Ex4}
\end{center}
\end{table}

\begin{figure}[ht]
\centering
\includegraphics[scale=0.5]{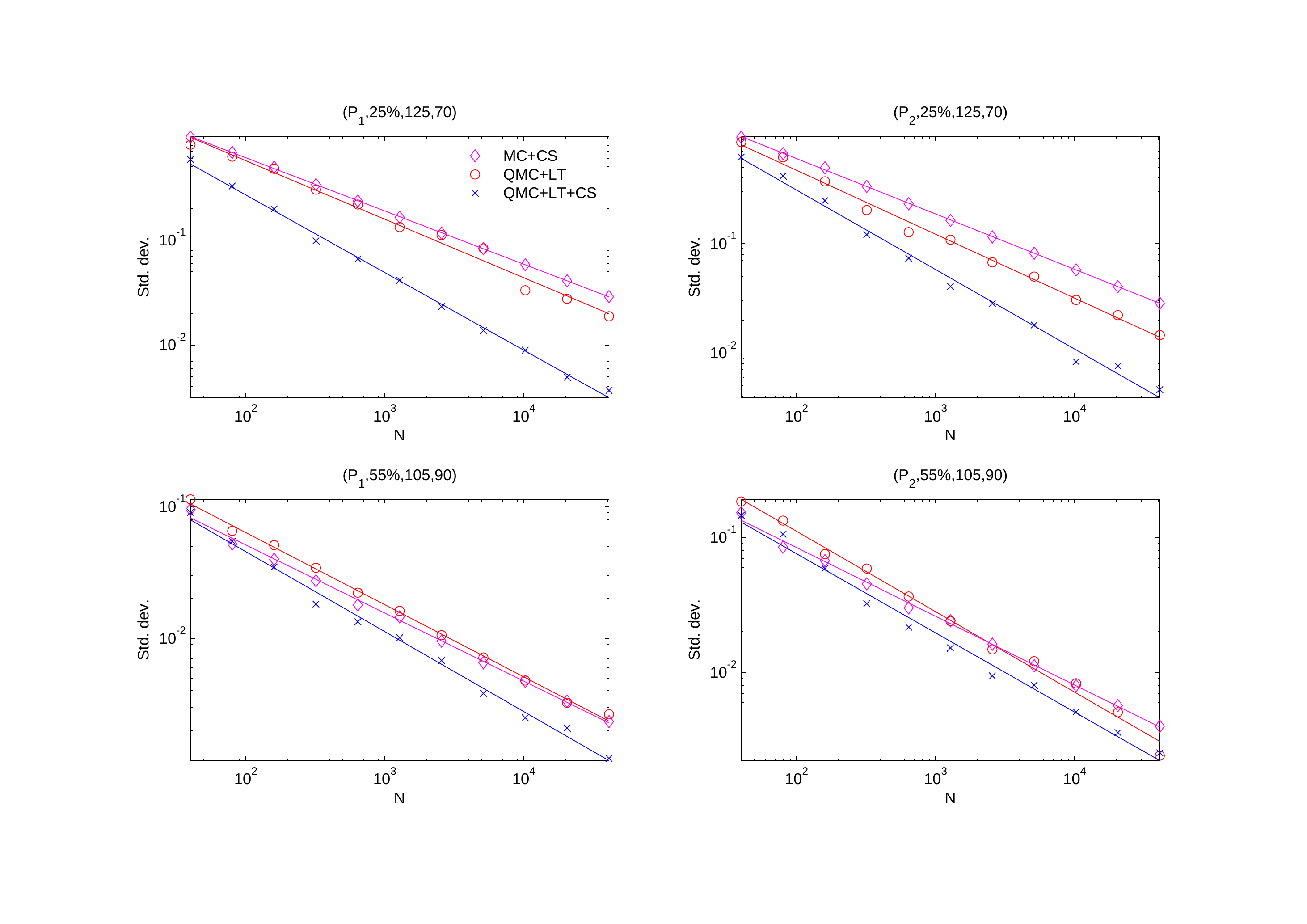}
\caption{Convergence graphs corresponding to the single barrier Asian basket. The title of each graph denotes $(P,\sigma_1,B,K)$.}\label{fig:example5}

\end{figure}

\begin{table}
\begin{center}
\begin{tabular}{ cA c c}
     $(P, \sigma_1, B, K)$ & QMC+LT+CS & QMC+LT & MC+CS  \\
\hline  
$(P_1,0.25,125,70)$ & $0.6991$ & $0.5363$ & $0.4939$ \\
$(P_2,0.25,125,70)$ & $0.7296$ & $0.5321$ & $0.4932$ \\
$(P_1,0.55,105,70)$ & $0.5526$ & $0.5838$ & $0.5046$ \\
$(P_2,0.55,105,70)$ & $0.6145$ & $0.5828$ & $0.5043$ \\                           
\end{tabular}
\caption{The convergence factors $\alpha$ obtained from the linear regression $\log(\sigma)=\beta-\alpha\log(N)$ for single barrier Asian basket.}
\label{tab:ex5conv}
\end{center}
\end{table}

\subsection{Double barrier binary Asian}\label{sec:dbba}
Consider the following option payoff on a single asset:
\begin{align*}
 g
  &= 
 \mathbb{I}\left\{\frac{1}{m}\sum_{i=1}^{m}S(t_i)\geq 100\right\}\mathbb{I}\left\{\min_{i=1,\ldots,m} S(t_i)\geq B^L\right\}\mathbb{I}\left\{\max_{i=1,\ldots,m} S(t_i)\leq B^U\right\}
  .
\end{align*}
The model parameters are $S(0)=100$, $\sigma=30\%$, $T=3$ months and $r=0\%$.
The results are presented in \RefTab{table:Ex2}.

\begin{table}[h]
\begin{center}
\begin{tabular}{c cA c}
     $(B^L,B^U)$ & $m$ & QMC+LT+CS & QMC+LT \\
\hline 
$(1,1000)$ & $60$ & $1685$\% & $1685$\% \\
$(50,150)$ & $60$ & $1692$\% & $1180$\% \\
$(90,110)$ & $60$ & $292$\% & $118$\% \\[1mm]
$(98,102)$ & $2$ & $296$\% & $26$\% \\
$(98,102)$ & $3$ & $100$\% & $13$\% \\
$(98,102)$ & $4$ & $40$\% & $4$\%
\end{tabular}
\caption{Double barrier binary Asian. The reported numbers are the standard deviations of the MC+CS method divided by those of the QMC+LT and QMC+LT+CS methods. The MC+CS method uses $163840$ samples, while the QMC+LT and QMC+LT+CS methods use $4096$ samples and $40$ independent shifts.
	 }
\label{table:Ex2}
\end{center}
\end{table}

In the first three examples, the number of observation dates $m$ is fixed at $60$, and the barriers are chosen increasingly closer to $S(0)$. 
As the barriers are tightened the performance of both the QMC+LT and QMC+LT+CS method drop, but the QMC+LT+CS scheme clearly keeps the upper hand.
In order to push the QMC+LT+CS method to the limit, we consider the extreme example of unrealisticly tight barriers $(B^L,B^U)=(98,102)$ and choose $m=2,3,4$. 
These results are shown in the last three rows of \RefTab{table:Ex2}.
Although our method gets a serious variance reduction compared to the original QMC+LT method, we also notice that for $m=4$ our QMC+LT+CS scheme does not outperform the MC+CS method. 
To understand what is happening, we can plot the projection of the payoff on the first two dimensions $u_1$ and $u_2$.
This is shown in \RefFig{fig:Ex2m}.
We see that the LT algorithm is not able to separate the positive payoffs from the zero payoffs for increasing $m$ (where, e.g., in \cite{NW2012} it is observed that having a good separation will make the QMC method more effective).
A possible solution here might be to use a non-linear method.
Another observation is that in case of a tight double barrier, it is not always possible to find a valid bound on $u_1$.
This is illustrated in \RefFig{fig:Ex2db} where the path is drawn for several of $u_1$ for $m=4$ for fixed values of $u_2,\ldots,u_4$.
The thicker straight lines indicate the barriers at $98$ and $102$.
Clearly, there is no possible choice for $u_1$ such that the asset path stays between these barriers. 
In view of\RefEq{Eq:estim}, this means that this sample will be set to zero, as $\max(\Upsilon_u-\Upsilon_d,0)=0$.
While this reduces the efficiency of our method, it must be noted that a product with such extreme barriers is not encountered in the market.
We also want to stress that our estimator will still be unbiased, and as the next example will show, this observation does not necessarily imply that our method underperforms the MC+CS method in these cases. 

\begin{figure}[ht]
\centering
\subfigure[$m=2$]{
\includegraphics[scale=0.4]{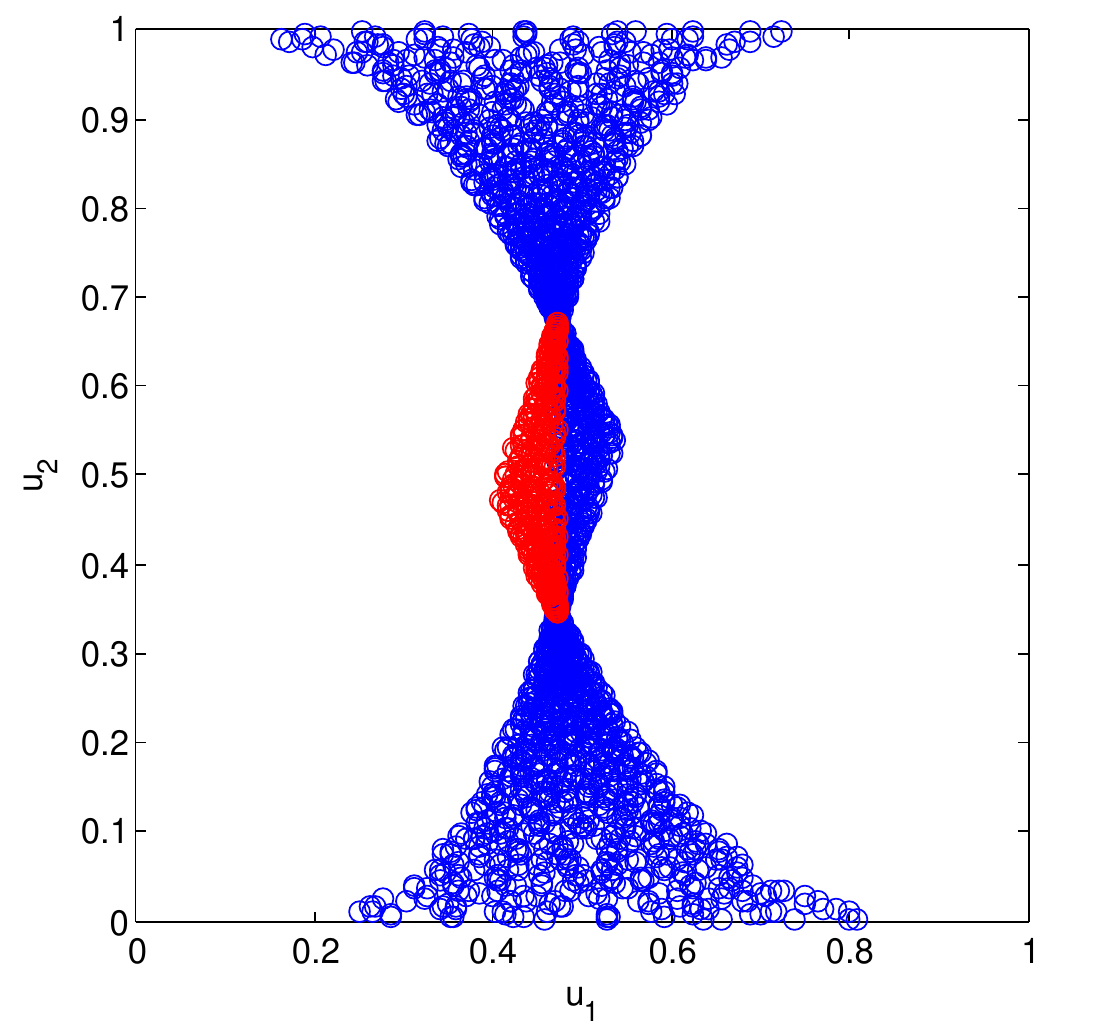}
\label{fig:m2ud}
}
\subfigure[$m=3$]{
\includegraphics[scale=0.4]{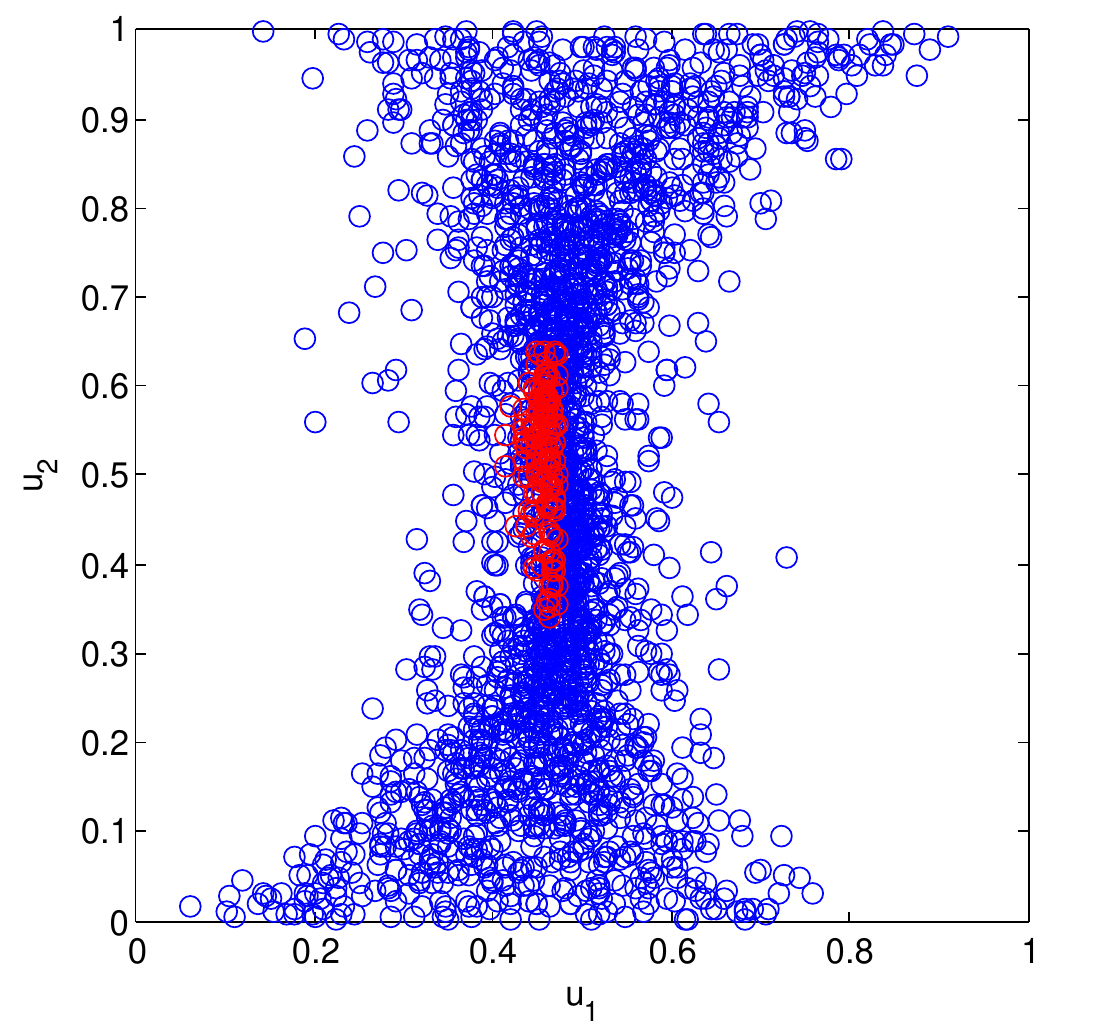}
\label{fig:m3ud}
}
\subfigure[$m=4$]{
\includegraphics[scale=0.4]{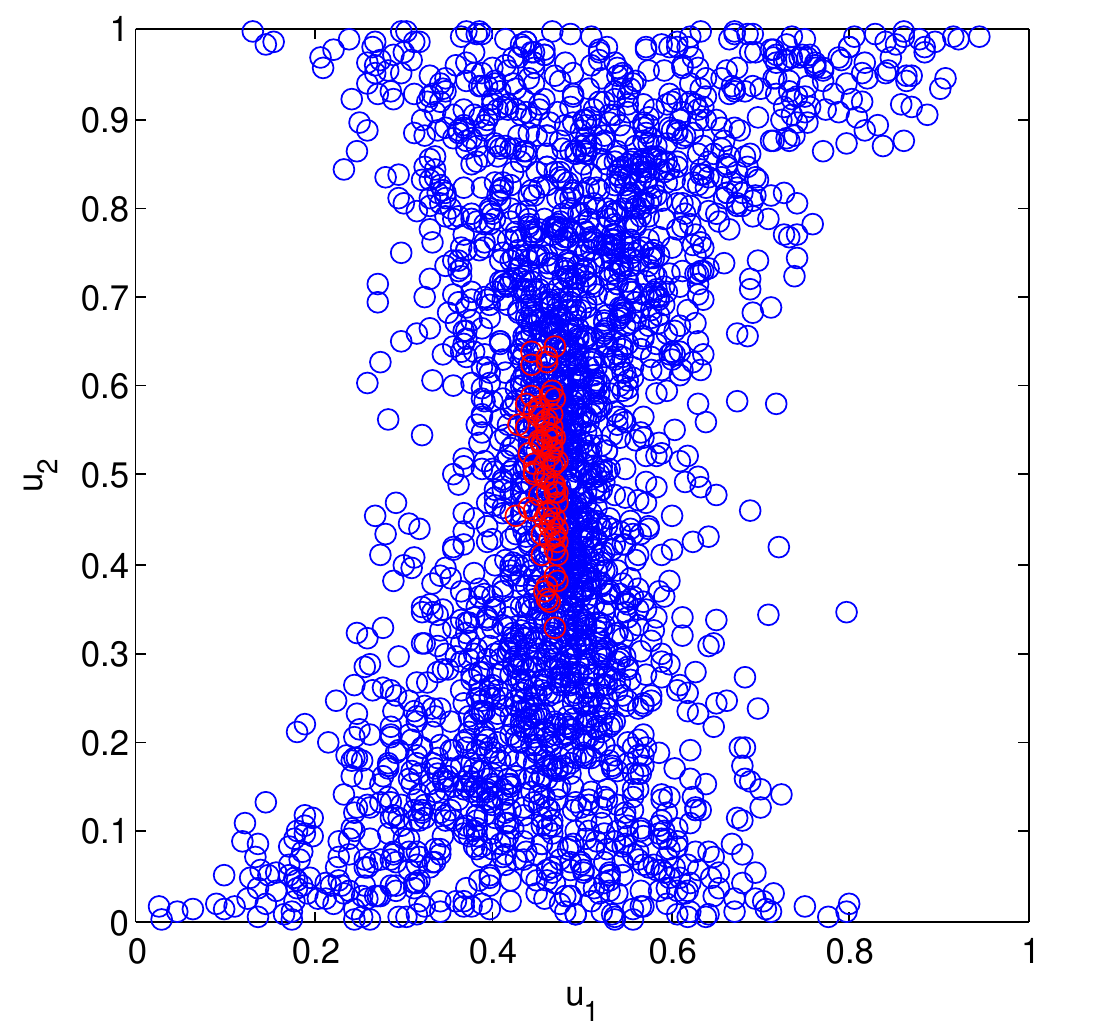}
\label{fig:m16ud}
}
\caption{The projection of the double barrier binary Asian payoff on the first two dimensions $u_1$ and $u_2$ under the QMC+LT+CS method. Blue circles indicate a zero payoff, red circles a non-zero payoff.}\label{fig:Ex2m}
\end{figure}

\begin{figure}[ht]
\centering
\includegraphics[scale=0.5]{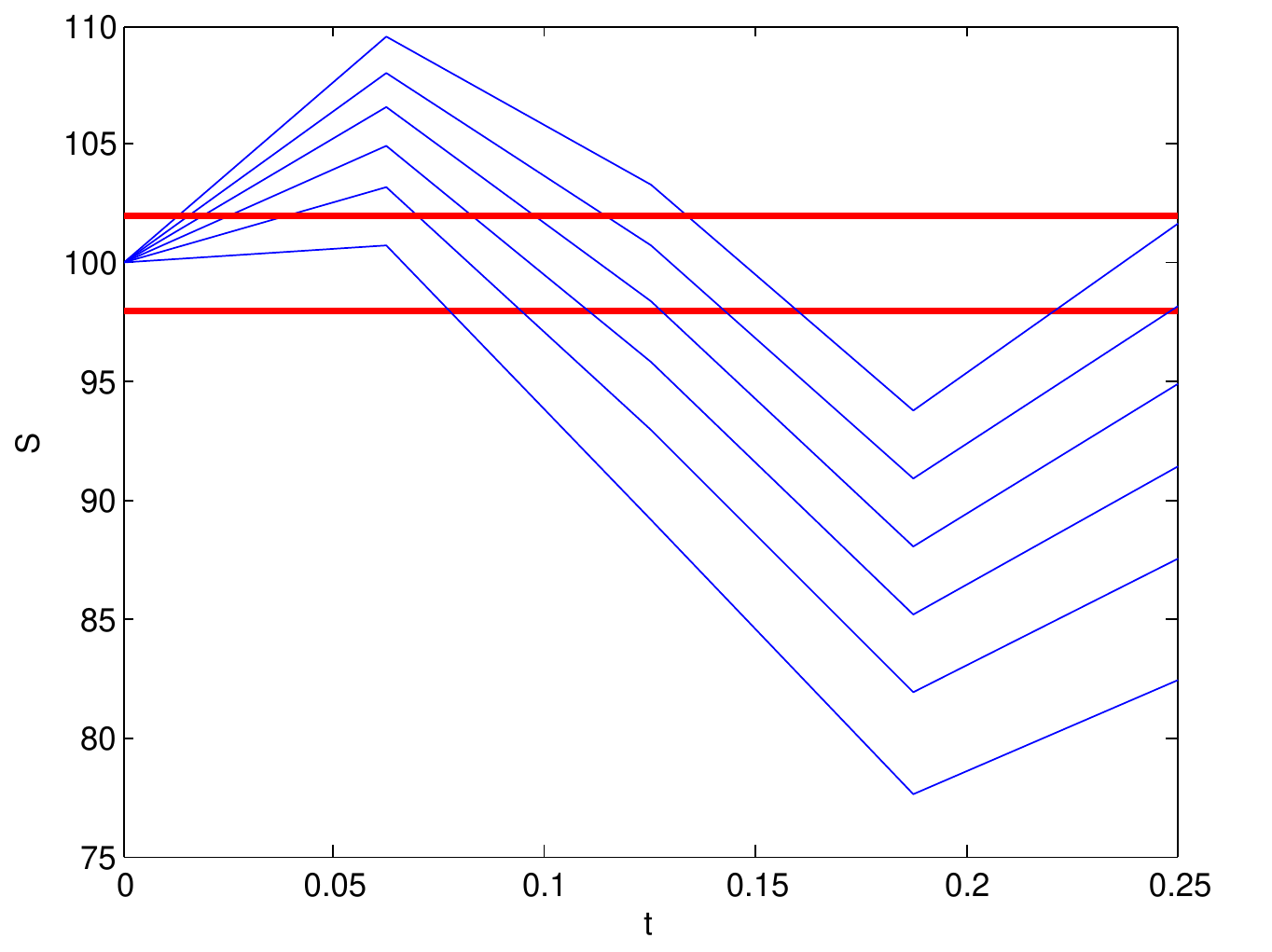}
\caption{The effect of $u_1$ on a sample path for the double barrier binary Asian with very narrow barriers, when $u_2, u_3$ and $u_4$ are fixed.}\label{fig:Ex2db}
\end{figure}

\subsection{Single barrier Asian}\label{sec:sba2}

We now construct an example for which the first $m$ elements of the first column of $A$ do not have the same sign and we thus need both $\Upsilon_u$ and $\Upsilon_d$ for the QMC+LT+CS algorithm.
In this example we take the following payoff on two assets:
\begin{align*}
  g
  &= 
 \left(\frac{1}{4}\left( S_1(t_1)+S_2(t_1)+S_1(t_2)+S_2(t_2)\right)\geq 1\right)\mathbb{I}\left\{\min\left( S_1(t_1),S_1(t_2)\right)\leq B\right\}
  .
\end{align*}
The model parameters are $\sigma_1=40\%$, $\sigma_2=60\%$, $r=8\%$, $S_1(0)=S_2(0)=1$, $t_1=\frac{1}{2}$, $t_2=1$ and $B=1.1$.
Straightforward calculations show that $a_{11}<0$ and $a_{21}>0$ if
\begin{align*}
	&\rho\in\left[ \frac{-\sigma_1}{\sigma_2e^{r-\sigma^2_2/2}}, \frac{-2\sigma_1}{\sigma_2(e^{(r-\sigma^2_2/2)/2} + e^{r-\sigma^2_2/2})}\right] \approx [-73.68\%,-71.84\%].
\end{align*}
We take $\rho=-72\%$.
In this case, it is possible that $\Upsilon_d>\Upsilon_u$ and our sample is wasted.
This happened in our simulation about $50\%$ of the time. 
Still we get a very nice result as the standard deviation of the MC+CS method divided by those of the QMC+LT and QMC+LT+CS methods based on $4096$ samples and $40$ shifts result in $384\%$ and $399\%$ respectively. 
A convergence plot is given in figure \RefFig{fig:Example3:convergence} while in \RefFig{fig:Example3:u1effect} the effect of $u_1$ on a sample path for which $\Upsilon_d>\Upsilon_u$ is given. 
Here the asset path is drawn for several values of $u_1$ and a fixed value of $u_2$.
The thick straight line indicates the barrier at $1.1$.
Clearly, there is no possible choice for $u_1$ such that the asset path stays below the barrier. 
We again want to stress that this does not imply unbiasedness, nor underperformance of our method, as the standard deviation is still about four times smaller than that of the MC+CS method.

\begin{figure}[ht]
\centering
\subfigure[Convergence graph.]{\label{fig:Example3:convergence}
\includegraphics[scale=0.5]{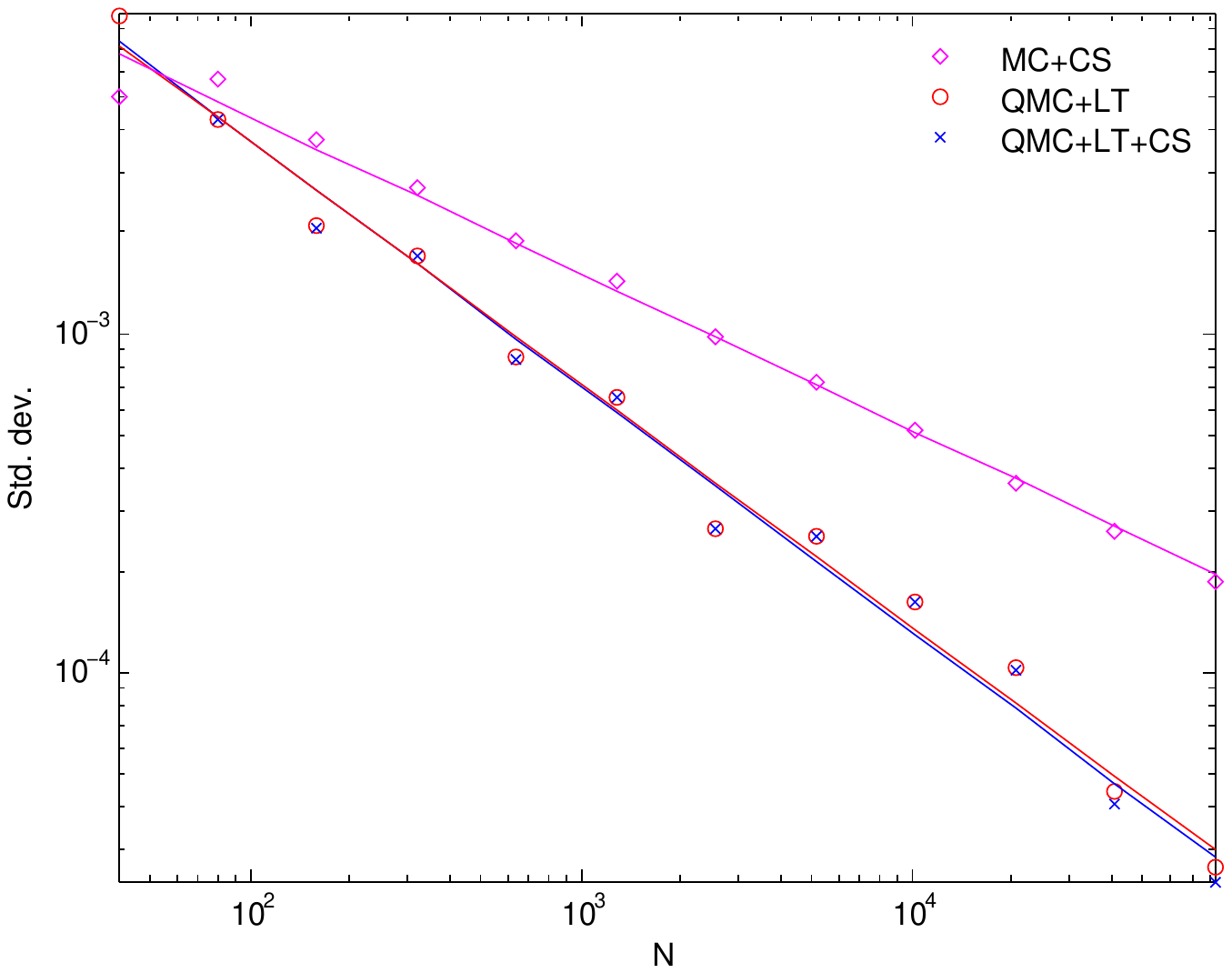}
}
\subfigure[Effect of $u_1$ on a sample path when $u_2$ is fixed.]{\label{fig:Example3:u1effect}
\includegraphics[scale=0.5]{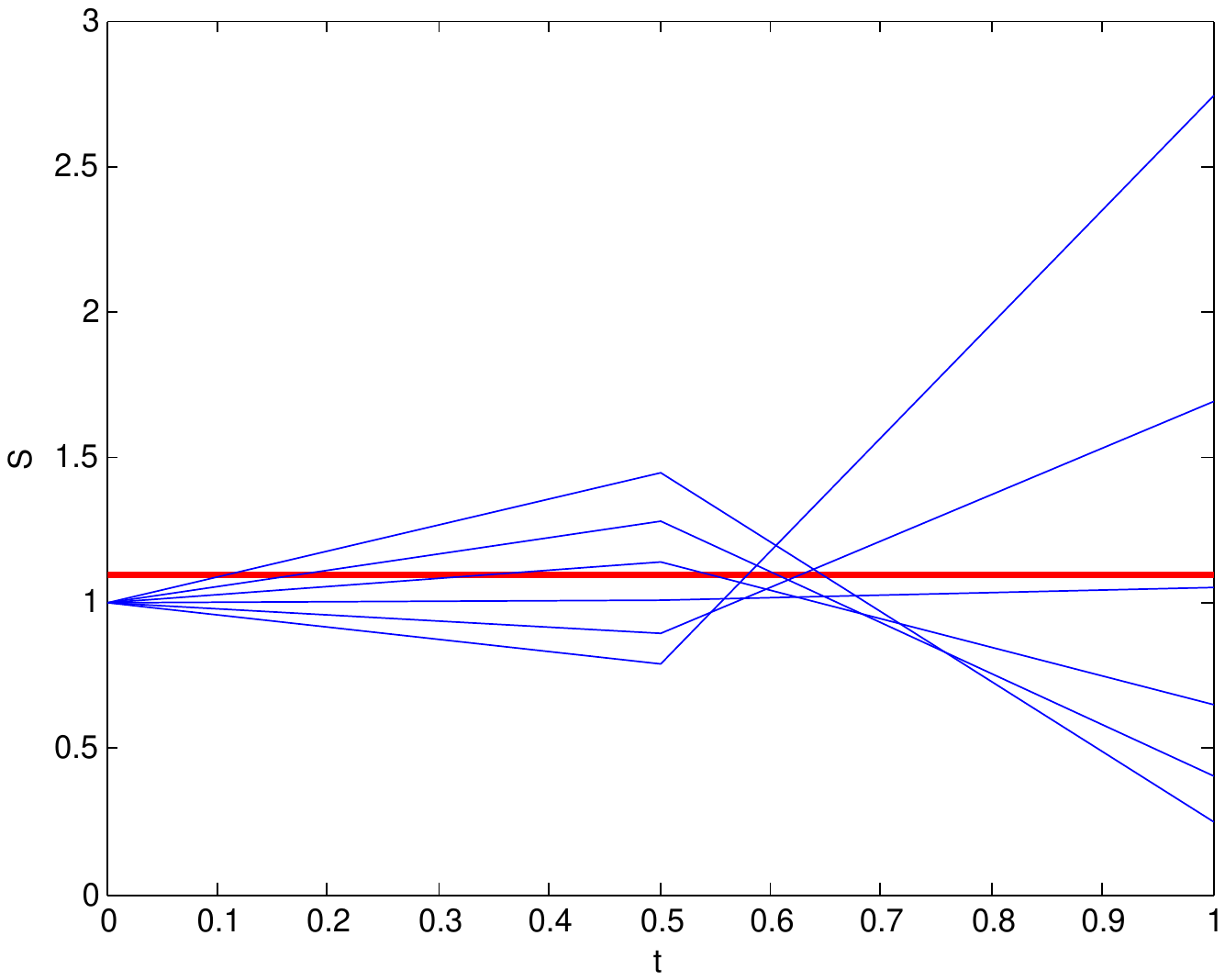}
}
\caption{Single barrier Asian example with mixed signs for the conditional sampling.}\label{fig:Example3}
\end{figure}

\subsection{Binary barrier}
As a final experiment consider the following option payoff on one asset:
\begin{align*}
 g
  &= 
  \mathbb{I}\left\{\max_i S(t_i)\leq 105\right\}
  .
\end{align*}
The model parameters are $S(0)=100$, $\sigma=30\%$, $T=3$ months and $r=0\%$.
We are essentially integrating a closed volume in the $m$-dimensional hypercube.
The resulting convergence factors are given in \RefTab{table:conv}.
These factors are calculated using the linear regression $\log(\sigma)=\beta-\alpha\log(N)$.
As we can see, increasing $m$ decreases the convergence speed for both QMC+LT and QMC+LT+CS.
This example illustrates that conditional sampling results in a variance reduction, but not necessarily in an increase in convergence speed. 
One possible explanation is as given in \RefSec{sec:dbba}, i.e., due to the barrier condition the LT algorithm is not capable any more of aligning the discontinuities with the axes.
Furthermore, as illustrated in \cite{NW2012}, it might be necessary to sample the boundary of the closed volume adaptively to increase the speed of convergence.

\begin{table}[h]
\centering
\begin{tabular}{c|cc}
    $m$ &  QMC+LT & QMC+LT+CS \\
\hline
  $2$ & $0.7804$ &   $1.1830$     \\
  $3$ & $0.6820$ &   $1.0383$    \\
  $4$ & $0.6815$ &   $0.7925$    \\
  $5$ & $0.5894$ &   $0.7794$    \\
  $6$ & $0.5328$ &   $0.6015$    \\
  $\vdots$ & $\vdots$ &   $\vdots$    \\
  $60$ & $0.5454$ &   $0.5040$     
\end{tabular}
\caption{The convergence factors $\alpha$, which we obtained from the linear regression $\log(\sigma)=\beta-\alpha\log(N)$ for the binary barrier.}
\label{table:conv}
\end{table}



\section{Extension: Knock-in options}\label{sec:knock-in}

In the previous section we have constructed the conditional sampling scheme for the LT algorithm for knock-out options, i.e., the value is set to zero when crossing the barrier(s).
In this section, we will extend our algorithm to knock-in options where the option is worthless unless the barrier is crossed.
Such an extension was also considered in \cite{GS2001} where the authors note ``Dealing with knock-in options is not simple, but is possible if there is a known expression $g_i(\bsS_i)$ for the present value of a barrierless option, received at time $t_i$ when the state vector is $\bsS_i$, whose payoff will be $g(\bsS_m)$ at time $t_m$.''
They then continue with ``This is the case for sufficiently simple knock-in options'' and give some examples.
We will first discuss the method described in \cite{GS2001} and then propose an easier and more flexible scheme which can also be used when this condition is not fulfilled.
Note that in \cite{GS2001} paths are constructed in an incremental fashion as this is key to understanding their procedure.

We consider an up-\&-in condition, i.e., a payoff
\begin{align*}
 g(S_1(t_1),\ldots,S_1(t_m), \ldots, S_n(t_m))
 &=
 \max(f(S_1(t_1),\ldots,S_n(t_m)),0)\,\mathbb{I}\left\{\max_{j=1,\ldots,m}S_1(t_j)>B\right\}
 .
\end{align*}
We first explain the scheme from \cite{GS2001}.
Assume that there is a known expression for the option value without the barrier at every time step $t_i$ given the state vector $\bsS_i = (S_1(t_1),\ldots,S_1(t_i),\ldots, S_n(t_i))$ (an incomplete path from $t_0$ up to $t_i$), denoted by $g(\bsS_i)$,
then by noting that a knocked-in barrier option becomes a regular option on knock-in, one samples the variable
\begin{align*}
	\sum_{i=1}^m \left(\mathbb{I}\left\{\max_{j=1,\ldots,i-1}S_1(t_j)<B\right\}-\mathbb{I}\left\{\max_{j=1,\ldots,i}S_1(t_j)<B\right\}\right) g(\bsS_i)
	&=
	\sum_{i=\argmin_{j=1,\ldots,m} S_1(t_j) \ge B}^m
	g(\bsS_i)
	.
\end{align*}
To take advantage of importance sampling the authors in \cite{GS2001} extend this basic sampling scheme so that in each time step $t_i$, there are two successors to $S_1(t_i)$: $S_1(t_{i+1})$ simulated conditional on \emph{no} knock-in and $S^*_1(t_{i+1})$ simulated conditional on knock-in.
(The incremental path is afterwards continued from the $S_1(t_{i+1})$ value.)
The authors then propose to sample
\begin{align*}
	&
	\sum_{i=1}^m L_{i-1} \, \left( 1- \mathbb{P}\left[ S_1(t_{i-1}) < B \mid S_1(t_j) \right] \right) \, g(\bsS^*_i)
	,
\end{align*}
where $L_{i-1}$ is the likelihood as defined in\RefEq{eq:likelihood}.
Now each term in the sum contributes to the random variable.
However, this scheme is rather restrictive as it is required to know the value of the option in each time step.
When considering a knock-in arithmetic Asian option for example, this is not the case.

We propose an easier and more flexible scheme as an extension of our method from \RefSec{sect:LTCS}. Under the LT algorithm, when considering an up-\&-in option, we must have for some $t_j$
\begin{align*}
 \sigma_1W_1(t_j) 
  &\geq
  b(t_j)
  ,
\end{align*}
where $b(t_j)$ is defined in\RefEq{eq:LTrestric}.
The difference with a knock-out type option is that now the above condition must only hold for at least one $t_j$.
Again suppose $a_{j,1}>0$ for $j\in\mathcal{P} \subseteq \{1,\ldots,m\}$ and $a_{j,1}<0$ for all other $j \notin \mathcal{P}$.
For an up-\&-in option we can essentially ``force'' the paths to cross the barrier level by imposing
\begin{align*}
 \Phi^{-1}(u_1)
  &>\min_{j\in\mathcal{P}}\left[
  \frac{b(t_j) - a_{j,2}\Phi^{-1}(u_2) - \ldots - a_{j,mn}\Phi^{-1}(u_{mn})}{a_{j,1}}\right]
\end{align*}
or
\begin{align*}
 \Phi^{-1}(u_1)
  &<\max_{j\notin\mathcal{P}}\left[
  \frac{b(t_j) - a_{j,2}\Phi^{-1}(u_2) - \ldots - a_{j,mn}\Phi^{-1}(u_{mn})}{a_{j,1}}\right]
  .
\end{align*}
For a down-\&-in option the condition on $u_1$ is analogously
\begin{align*}
 \Phi^{-1}(u_1)
  &<\max_{j\in\mathcal{P}}\left[
  \frac{b(t_j) - a_{j,2}\Phi^{-1}(u_2) - \ldots - a_{j,mn}\Phi^{-1}(u_{mn})}{a_{j,1}}\right]
\end{align*}
or
\begin{align*}
 \Phi^{-1}(u_1)
  &>\min_{j\notin\mathcal{P}}\left[
  \frac{b(t_j) - a_{j,2}\Phi^{-1}(u_2) - \ldots - a_{j,mn}\Phi^{-1}(u_{mn})}{a_{j,1}}\right]
  .
\end{align*}
We will denote these bounds again by $\Upsilon_d$ and $\Upsilon_u$.
We can again prove that this estimator is unbiased and has a standard deviation at most that of the unaltered QMC+LT method.
\begin{theorem}
The estimator for knock-in options based on conditional sampling by
\begin{align}
	\hat{g}_2
	&=
	(\Upsilon_u-\Upsilon_d)\max(f(\hat{u}_1,u_2,\ldots,u_{mn}),0)
	\label{Eq:estim}
\end{align}
 is unbiased. That is,
	$\mathbb{E}[g] =\mathbb{E}[\hat{g}_2].$
\end{theorem}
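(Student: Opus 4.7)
The plan is to recycle the structure of the proof of \RefThm{Thm:LTCSunbiased}, replacing the knock-out indicator by the knock-in one and using the bounds $\Upsilon_d, \Upsilon_u$ derived just before the theorem statement. Let $\chi(\bsU)$ be the indicator of the knock-in event $\{\max_{j=1,\ldots,m} S_1(t_j) > B\}$, and without loss of generality take $r=0$. Then the contract value equals $\mathbb{E}[\chi(\bsU)\max(f(\bsU),0)]$, written as an integral over $[0,1]^{mn}$. The bounds from the preceding discussion characterise, for fixed $u_2,\ldots,u_{mn}$, the slice of $u_1 \in [0,1]$ on which $\chi=1$: by construction $\chi(\bsu)=1$ iff $u_1 \in (\Upsilon_d, \Upsilon_u)$. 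Thus
\begin{align*}
  \mathbb{E}[g]
  &=
  \int_{[0,1]^{mn-1}} \int_{\Upsilon_d}^{\Upsilon_u} \max(f(u_1,\ldots,u_{mn}),0)\, du_1\, du_2\cdots du_{mn}.
\end{align*}

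Next I would apply the same change of variables as in \RefThm{Thm:LTCSunbiased}, namely $\hat{u}_1 = (u_1 - \Upsilon_d)/(\Upsilon_u - \Upsilon_d)$ and $u_i = \hat{u}_i$ for $i \ge 2$, with Jacobian $(\Upsilon_u - \Upsilon_d)$. This pulls the integral back to $[0,1]^{mn}$ and identifies the integrand with $(\Upsilon_u-\Upsilon_d)\max(f(\hat{\bsU}),0)$, which is exactly $\hat{g}_2$ as defined in\RefEq{Eq:estim}. Hence $\mathbb{E}[g] = \mathbb{E}[\hat{g}_2]$, which is the claim. Note that, in contrast to the knock-out proof, the prefactor in the estimator is $(\Upsilon_u-\Upsilon_d)$ rather than $\max(\Upsilon_u-\Upsilon_d,0)$; this is consistent because, in the knock-in setting, a ``wrong sign'' for the bracket would simply mean that the knock-in event already occurs with probability one over $u_1$, so the rescaling is still measure-preserving.

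The one delicate step, and likely the main obstacle, is justifying that the knock-in slice in $u_1$ really is a single interval $(\Upsilon_d, \Upsilon_u)$. When $\mathcal{P}$ is a proper nonempty subset of $\{1,\ldots,m\}$, the ``or'' in the two conditions on $\Phi^{-1}(u_1)$ can in principle carve out a union of two half-lines; translated back to $[0,1]$ this is either all of $[0,1]$ (in which case $\chi \equiv 1$ and the claim is trivial) or the complement of an interval. In the latter case I would split the outer integral into the two disjoint pieces corresponding to the lower bound from $\mathcal{P}$ and the upper bound from $\{1,\ldots,m\}\setminus\mathcal{P}$, apply the affine change of variables on each piece separately, and then verify that the two contributions recombine into the single bracket $(\Upsilon_u-\Upsilon_d)$ under the convention fixed in the preceding discussion. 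Apart from this bookkeeping, everything else is routine Fubini and a linear substitution, exactly as in the knock-out proof.
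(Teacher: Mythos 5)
Your main line of argument is the one the paper intends: its proof consists of the single phrase ``Analogous to \RefThm{Thm:LTCSunbiased}'', i.e.\ identify the $u_1$-slice on which the knock-in condition holds (given $u_2,\ldots,u_{mn}$), apply the affine substitution $\hat{u}_1=(u_1-\Upsilon_d)/(\Upsilon_u-\Upsilon_d)$, and read off the weight as the Jacobian. For the configurations in which that slice really is a single interval --- all $a_{j,1}$ of one sign, so that the missing bound is $\Upsilon_u=1$ or $\Upsilon_d=0$ --- your first paragraph is exactly the paper's argument and is correct.

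The genuine gap is in your handling of the mixed-sign case, which you correctly single out as the delicate step but then resolve wrongly. With $\Upsilon_d=\Phi\bigl(\min_{j\in\mathcal{P}}[\cdots]\bigr)$ and $\Upsilon_u=\Phi\bigl(\max_{j\notin\mathcal{P}}[\cdots]\bigr)$, the knock-in slice is $[0,\Upsilon_u]\cup[\Upsilon_d,1]$. If $\Upsilon_u\ge\Upsilon_d$ this union is all of $[0,1]$, so a \emph{nonnegative} bracket is what corresponds to knock-in with conditional probability one; your parenthetical claim that a ``wrong sign'' signals certain knock-in is exactly backwards. If $\Upsilon_u<\Upsilon_d$ the slice is disconnected, its conditional measure is $\Upsilon_u+(1-\Upsilon_d)=1-(\Upsilon_d-\Upsilon_u)$ and not $\Upsilon_u-\Upsilon_d$ (which is negative there), and the single affine map $\hat{u}_1=\Upsilon_d+(\Upsilon_u-\Upsilon_d)u_1$ lands in the complementary no-knock-in interval (traversed backwards); consequently no splitting of the outer integral into the two half-lines can ``recombine into the single bracket $(\Upsilon_u-\Upsilon_d)$'' --- the correct conditional estimator in that case needs a two-branch, measure-preserving rescaling onto the union together with the weight $\Upsilon_u+1-\Upsilon_d$. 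To be fair, the paper's one-line proof tacitly covers only the single-interval configurations, and the theorem's formula should be read with that proviso; but your proposal turns this tacit restriction into an explicit claim that the disconnected case reduces to the same bracket, and that claim is false, so that part of your argument does not go through as written.
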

\begin{proof}
Analogous to \RefThm{Thm:LTCSunbiased}.
\end{proof}
\begin{theorem}
	When using a randomly shifted quasi-Monte Carlo method, the estimator defined in\RefEq{Eq:estim} has reduced variance. That is,
	$\var[ \hat{g}_2] \leq \var[g ].$
	The inequality is strict if $\mathbb{P}\left[\max_j S_1(t_j) \le B\right] > 0$ and $\mathbb{E}\left[g\right]>0$, i.e., if there is a positive chance of no knock-in and a positive payoff.
\end{theorem}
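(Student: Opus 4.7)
The proof will closely mirror that of \RefThm{Thm:RedVar}. Since the preceding theorem gives $\mathbb{E}[\hat g_2] = \mathbb{E}[g]$, the variance inequality reduces to showing $\mathbb{E}[\hat g_2^2] \leq \mathbb{E}[g^2]$, and then the usual rewriting of the randomly shifted estimator does the rest. So the plan is to produce the analogue of the chain of equalities and a single key inequality at the end.

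First I would introduce the knock-in indicator $\chi(\bsU) = \mathbb{I}\{\max_{j}S_1(t_j) > B\}$ and observe, just as in \RefThm{Thm:RedVar}, that $\chi^2 = \chi$. The derivation of the conditions on $u_1$ in the text shows that the set of $u_1 \in [0,1]$ for which $\chi(\bsU) = 1$, with $u_2,\ldots,u_{mn}$ held fixed, is precisely the interval (or, in the mixed-sign case, union of intervals) of total Lebesgue measure $\Upsilon_u - \Upsilon_d$; equivalently, $\chi(u_1,u_2,\ldots,u_{mn})$ integrated in $u_1$ equals $\Upsilon_u - \Upsilon_d$. Next I would set up the randomly shifted QMC estimator exactly as in the proof of \RefThm{Thm:RedVar}, writing $v^{(i,k)}_j = \{u^{(i)}_j + \Delta^{(k)}_j\}$ and collecting the transformed integrand into
\begin{align*}
 F(\{\bsu^{(i)} + \bsDelta^{(k)}\})
 =
 f\!\left(\Upsilon_d + (\Upsilon_u - \Upsilon_d)v^{(i,k)}_1, v^{(i,k)}_2,\ldots, v^{(i,k)}_{mn}\right).
\end{align*}

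Then I would compute the second moment of the estimator by exchanging expectation and sums, using the standard fact that integrating $\bsDelta\mapsto\{\bsu+\bsDelta\}$ over $[0,1)^{mn}$ is the same as integrating $\bsDelta$ directly, and then applying the change of variables $\hat u_1 = (u_1 - \Upsilon_d)/(\Upsilon_u - \Upsilon_d)$ backwards to restore the original variables. This collapses each of the $M^2 N^2$ resulting integrals to a single integral over $[0,1)^{mn}$ of $(\Upsilon_u - \Upsilon_d)^2 \max(F(\bsDelta),0)^2$, which in turn equals the integral of $(\Upsilon_u - \Upsilon_d)\,\chi(\bsDelta)\max(f(\bsDelta),0)^2$ after replacing one factor $(\Upsilon_u - \Upsilon_d)$ by the integral of $\chi$ in $u_1$. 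The final inequality then follows from $\Upsilon_u - \Upsilon_d \leq 1$, together with $\chi^2 = \chi$ to identify the right-hand side with $\mathbb{E}[g^2]$. Strictness under the stated hypotheses is immediate because $\Upsilon_u - \Upsilon_d < 1$ on a set of positive measure exactly when there is positive probability that the path fails to knock in.

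The main obstacle, as I see it, is not the algebra but the careful bookkeeping in the mixed-sign case: there the knock-in $u_1$-region is a union of two sub-intervals of $[0,1]$, and while the abbreviation $\Upsilon_u - \Upsilon_d$ still denotes its total measure, the rescaling $\hat u_1 = \Upsilon_d + (\Upsilon_u - \Upsilon_d)u_1$ must be interpreted as the affine map onto this union (or, equivalently, one has to split the $u_1$-integral into the two pieces and rescale each separately). Once this identification is made explicit, the inequality $\Upsilon_u - \Upsilon_d \leq 1$ still closes the argument verbatim, so the proof can legitimately be condensed to ``analogous to \RefThm{Thm:RedVar}''.
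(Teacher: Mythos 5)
Your proposal is correct and follows exactly the route the paper intends: the paper's own proof is literally the one-line remark ``Analogous to \RefThm{Thm:RedVar}'', and your expansion reproduces that argument faithfully (unbiasedness from the preceding theorem, the random-shift second-moment computation, the exchange of one factor $\Upsilon_u-\Upsilon_d$ for the $u_1$-integral of the knock-in indicator, and the closing inequality $\Upsilon_u-\Upsilon_d\le 1$ with $\chi^2=\chi$). Your remark that in the mixed-sign case the admissible $u_1$-set is a union of two intervals, so that the weight must be read as its total measure and the rescaling as the map onto that union, is a fair and useful clarification of a point the paper leaves implicit.
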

\begin{proof}
Analogous to \RefThm{Thm:RedVar}.
\end{proof}

\subsection{Verification by continuous down-\&-in put with knock-in condition}
We can check the valuation of a simple continuous down-\&-in put option to illustrate the unbiasedness of our method for pricing knock-in options.
Consider the put variant:
\begin{align*}
  g(S(t)_{t\in[0,T]})
  &=
  \max\left(K-S(T),0\right) \mathbb{I}\left\{ \min_{t\in[0,T]} S(t) \leq B \right\}
  .
\end{align*}
We approximate the option value by using a fine time discretization of $500$ steps.
The value of the option using a continuous barrier is known analytically, see, e.g., \cite{Wilmott2006}.
The model parameters are $K=100$, $B=80$, $\sigma=20\%$ and $r=5\%$.
Three maturities are considered: $T=1$ year, $T=6$ months and $T=1$ month.
The valuation is based on $4096$ samples and $40$ independent digital shifts. 
The result is shown in \RefFig{fig:DIP}.
We see that the valuations are very close to the analytic values. 

\begin{figure}
	\centering
  \includegraphics[width=0.4\textwidth]{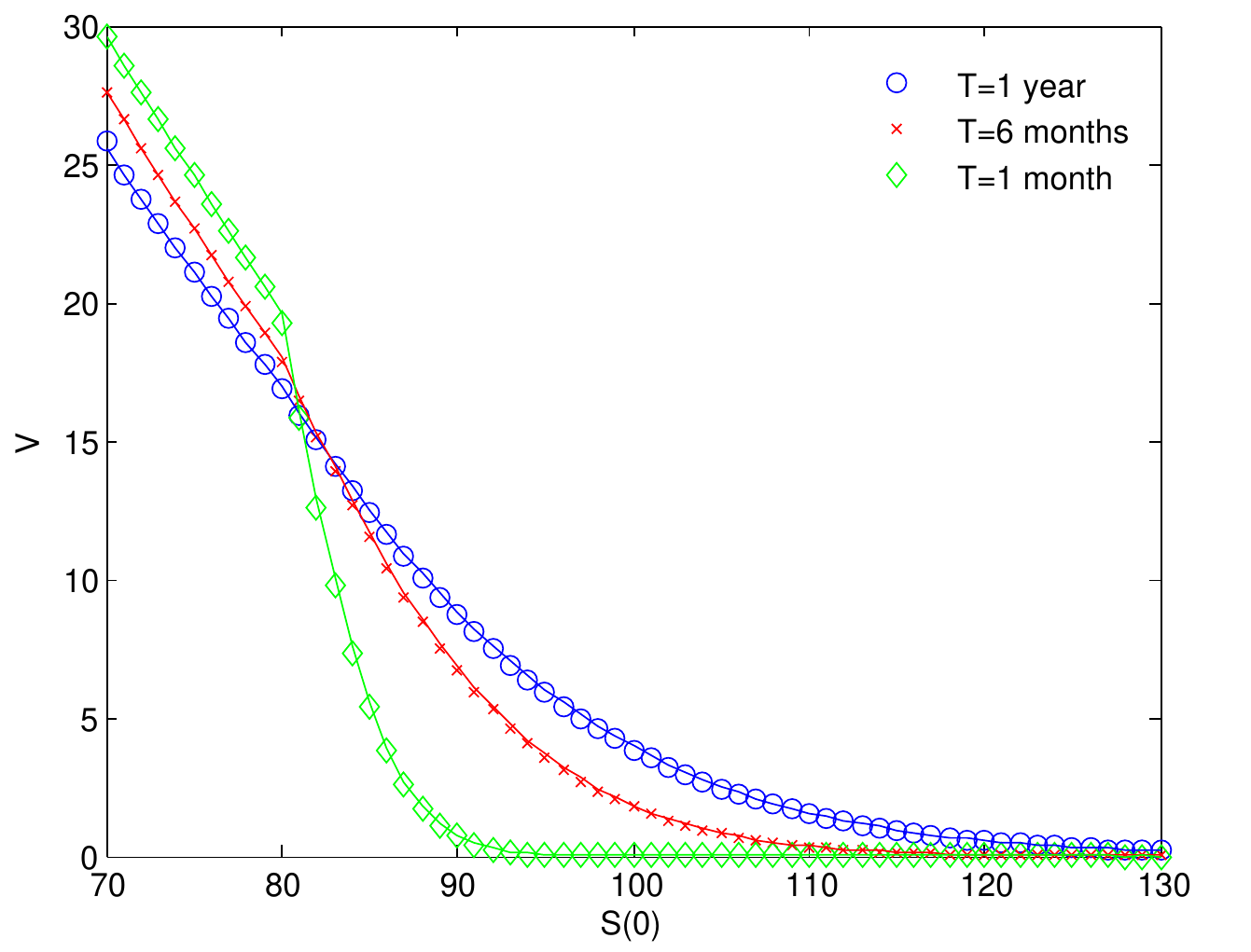}
  \caption{
    Verification with continuous down-\&-in put option with knock-in condition. Shown here are the option values in function of the initial asset price for three different maturities.
  }\label{fig:DIP}
\end{figure}

\subsection{Basket Asian with knock-in condition}

As a second test we consider the same Asian basket on four assets as in \RefSec{ex:sbba} but now with a knock-in condition.
Since there does not exist a closed-form solution of the arithmetic Asian option, we are not able to use the MC+CS method from \cite{GS2001}.
Therefore, we report the ratios of the standard deviations of the value estimates of QMC+LT to QMC+LT+CS for various choices of $\sigma_1$, $K$ and $B$ in \RefTab{table:Ex5}. 
The first three examples take $\sigma_1=25\%$, $K=70$ and $B=105$, $125$ and $200$. When the barrier is close to the starting value of $S_1$ (being $100$), the probability of knocking-in is higher, but we still see a significant variance reduction when using conditional sampling. As the barrier moves further away from the initial asset value, the probability of hitting the barrier drops, and we see that the variance reduction improves even further.

The next two examples take $\sigma_1=55\%$, $B=125$ and $K=70$ or $K=90$. When $K=70$, we see a variance reduction for the QMC+LT+CS method, but we see that increasing $\sigma_1$ from $25\%$ to $55\%$ decreases the variance reduction for $P_1$. This is what we would expect, since a higher volatility for $S_1$ implies that more paths are hitting the barrier without conditional sampling.


Also for the last two examples we observe that QMC+LT+CS shows a variance reduction to the QMC+LT method.
In conclusion: the results of our conditional sampling scheme QMC+LT+CS extended for knock-in options are again very satisfactory. 

\begin{table}
\begin{center}
\begin{tabular}{ cA |cA }
     $(P, \sigma_1, B, K)$ & QMC+LT+CS & $(P, \sigma_1, B, K)$ & QMC+LT+CS   \\
\hline  
$(P_1,0.25,105,70)$ & $287$\% &
$(P_2,0.25,105,70)$ & $237$\% \\
$(P_1,0.25,125,70)$ & $556$\% &
$(P_2,0.25,125,70)$ & $320$\% \\
$(P_1,0.25,200,70)$ & $2272$\% &
$(P_2,0.25,200,70)$ & $345$\% \\[1mm]
$(P_1,0.55,125,70)$ & $297$\%&
$(P_2,0.55,125,70)$ & $383$\% \\
$(P_1,0.55,125,90)$ & $213$\% &
$(P_2,0.55,125,90)$ & $372$\% \\[1mm]
$(P_1,0.25,120,110)$ & $180$\% &
$(P_2,0.25,120,110)$ & $178$\% \\
$(P_1,0.25,110,100)$ & $133$\% &
$(P_2,0.25,110,100)$ & $171$\%                   
\end{tabular}
\caption{Basket Asian with knock-in condition. The reported numbers are the standard deviations of the QMC+LT method (as the MC+CS method does not work for this example) divided by those of the QMC+LT+CS method. Both methods use $4096$ samples and $40$ independent shifts.
	 }
\label{table:Ex5}
\end{center}
\end{table}

\section{Extension: Root-finding}\label{sec:root-finding}

In this section we want to exploit the fact that our construction in \RefSec{sect:LTCS} only modifies the first uniform variate to satisfy the barrier condition. 
Here we propose to exploit the influence this variate has 
by calculating the bounds on $u_1$ as in \RefSec{sect:LTCS} to fulfill the barrier condition and then, using root-finding, determine the region (between these bounds) which produces a positive payoff.
We can then integrate out $u_1$ analytically. 

The idea of using root-finding is quite natural and has been previously discussed in, e.g., \cite{Gerstner2007} and \cite{Holtz2011}, albeit more directly.
For this method to be applicable it is necessary that we are able to determine bounds on $u_1$ (or equivalently $z_1 = \Phi(u_1)$) for it to be applicable.
We will consider an Asian basket as an example of a more complicated payoff and then afterwards a simple put option with a knock-in condition.

\subsection{Single barrier Asian basket}

An Asian basket option on $n$ assets has a payoff
\begin{align*}
  g
  &=
  \max\left(\frac1{mn} \sum_{i_1=1}^n \sum_{i_2=1}^m S_{i_1}(t_{i_2}) - K , 0 \right)
  \mathbb{I}\left\{\cdot\right\},
\end{align*}
with possible some barrier condition for knock-out or knock-in (denoted by $\mathbb{I}\left\{\cdot\right\}$ above).
The following theorem ensures that the problem is well-posed for Asian barrier options.
\begin{theorem}
Given any covariance matrix $\tilde{\Sigma}$ with factorization $AA'$, the function 
\begin{align*}
	f(z_1)
	&= 
	\frac{1}{mn} \sum_{i=1}^{mn} 
	S_{i_1}(0) 
	\, e^{(r-\sigma_{i_1}/2) t_{i_2} + \sum_{k=2}^{mn} a_{ik} z_k}
	\, e^{a_{i1} z_1} 
	-
	K
\end{align*}
\textup{(}where $i_1=\lfloor (i-1)/m\rfloor+1$ and $i_2 = i-(i_1-1)m$\textup{)} has at most two zeros. If all elements $a_{i1}$, $i=1,\ldots,mn$ are of the same sign and $K>0$, then the function $f(z_1)$ has exactly one zero.
\label{Thm:zeros}
\end{theorem}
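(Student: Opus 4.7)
My plan is to reduce the statement to elementary facts about convex functions of one variable, by viewing $f(z_1)$ as a positive linear combination of exponentials in $z_1$ (with everything else held fixed). Concretely I would first rewrite
\begin{align*}
f(z_1) &= \sum_{i=1}^{mn} c_i \, \E^{a_{i1} z_1} - K,
&
c_i &= \frac{1}{mn} S_{i_1}(0) \, \E^{(r-\sigma_{i_1}^2/2) t_{i_2} + \sum_{k=2}^{mn} a_{ik} z_k},
\end{align*}
and note that all $c_i>0$ since they are exponentials of real numbers multiplied by positive initial asset values. This absorbs all the irrelevant terms (weights, drift, and the dependence on $z_2,\ldots,z_{mn}$) into strictly positive constants and isolates the relevant dependence on $z_1$.

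For the first claim (at most two zeros), I would compute
\begin{align*}
f''(z_1) &= \sum_{i=1}^{mn} c_i \, a_{i1}^2 \, \E^{a_{i1} z_1} \;\ge\; 0,
\end{align*}
with strict inequality unless every $a_{i1}$ vanishes (in which case $f$ is the constant $\sum_i c_i - K$, which has either no zero or is identically zero, and the statement holds trivially). Hence $f$ is convex on $\mathbb{R}$. A convex function crosses any horizontal line in at most two points, so $f$ has at most two real zeros. This is the conceptual heart of the argument.

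For the second claim, assume all $a_{i1}$ share the same sign (and not all of them are zero, which we may assume since otherwise $A$ would have a zero first column, contradicting $AA'=\widetilde{\Sigma}$ being a non-degenerate covariance). Then
\begin{align*}
f'(z_1) &= \sum_{i=1}^{mn} c_i \, a_{i1} \, \E^{a_{i1} z_1}
\end{align*}
is a sum of terms of a common sign and is therefore never zero, so $f$ is strictly monotonic. It remains to check the limits: if all $a_{i1}>0$, then $f(z_1)\to -K<0$ as $z_1\to-\infty$ and $f(z_1)\to+\infty$ as $z_1\to+\infty$; if all $a_{i1}<0$, the roles of $\pm\infty$ swap. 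In either case $f$ is continuous, strictly monotonic, and changes sign, so the intermediate value theorem yields exactly one zero.

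The only mildly delicate point is the degenerate case where some or all $a_{i1}=0$, which I would dispose of as a remark: if all $a_{i1}=0$ then $f$ is constant and conditioning on the barrier via $u_1$ is moot, while if only some vanish, the argument above for convexity and (under the same-sign hypothesis) strict monotonicity goes through unchanged since we only used $a_{i1}^2\ge0$ and non-negativity of the contributions to $f'$. Apart from this, the proof is essentially a two-line observation about convexity of sums of exponentials.
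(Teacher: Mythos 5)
Your proof is correct and follows essentially the same route as the paper: positive coefficients $c_i$, strict monotonicity plus the limits at $\pm\infty$ for the equal-sign case, and positivity of $f''$ for the at-most-two-zeros bound (the paper packages the latter via Rolle's theorem applied to $f'$, you invoke strict convexity directly, which is the same idea). One cosmetic caveat: the assertion that ``a convex function crosses any horizontal line in at most two points'' requires \emph{strict} convexity (a merely convex function can vanish on a whole interval), but since you have $f''(z_1)>0$ whenever some $a_{i1}\neq 0$, your argument stands as written.
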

\begin{proof}
First, consider the case where $a_{i1}\geq 0$ for all $i$.
Then each asset price is a monotonically increasing function of $z_1$.
Therefore, the sum is a monotone function of $z_1$ as well.
Since $S_{i_1}(t_{i_2})\rightarrow 0$ for $z_1\rightarrow -\infty$ and $S_{i_1}(t_{i_2})\rightarrow+\infty$ for $z_1\rightarrow +\infty$, there is exactly only one zero of the function $f(z_1)$ if $K>0$, and none if $K\leq 0$.
A similar argument can be made when $a_{i1}\leq 0$ for all $i$. 

Consider then the case where $a_{i1}$ has mixed signs. 
We have that 
\begin{align}
	f'(z_1)
	&=  
	\frac{d}{dz_1}f(z_1) = \frac{1}{mn}\sum_{i=1}^{mn} S_{i_1}(0) \,
	e^{(r-\sigma_{i_1}/2) t_{i_2} 
	+ \sum_{k=2}^{mn} a_{ik} z_k} 
	\, a_{i1} \, e^{a_{i1} z_1}
	\label{eq:dfz1}
\end{align}
and
\begin{align}
	f''(z_1)
	&= 
	\frac{d^2}{dz_1^2}f(z_1) 
	= 
	\frac{1}{mn}\sum_{i=1}^{mn} S_{i_1}(0) \, 
	e^{(r-\sigma_{i_1}/2) t_{i_2}
	+ \sum_{k=2}^{mn} a_{ik} z_k} 
	\, a^2_{i1} \, e^{a_{i1} z_1}
	.
	\label{eq:ddfz1}
\end{align}
Clearly, $f''(z_1)>0$ everywhere.
The function $f'(z_1)$ can be written as the difference of a monotone increasing function and a monotone decreasing function in $z_1$. 
This implies that $f'(z_1)\rightarrow-\infty$ as $z_1\rightarrow-\infty$ and $f'(z_1)\rightarrow+\infty$ as $z_1\rightarrow+\infty$.
So $f'(z_1)$ changes sign at least once.
For $f(z_1)$ to have more than two zeros, $f'(z_1)$ would have to change sign more than twice. 
However, $f'(z_1)$ has only one zero, which can be easily seen using Rolle's theorem and the observation that $f''(z_1)>0$.
Therefore $f'(z_1)$ changes sign exactly once, which leaves the possibility of zero, one or two zeros.
\end{proof}

In case the elements in the first column of $A$ have the same sign, we are guaranteed there is exactly one zero of the payoff function, which can be found using a root-finding algorithm such as Newton--Raphson. In case the root falls outside the interval $[\Upsilon_d,\Upsilon_u]$ as defined in \RefSec{sect:LTCS}, there is no value for $z_1$ which produces a positive payoff and we set the sample value equal to zero. If, on the other hand, the root does fall inside the interval, we can analytically integrate the payoff function over $z_1$ (or $u_1$) by using the forthcoming lemma.

When the elements in the first column of $A$ have mixed signs, things get a bit more complicated since, given $z_2$ to $z_{mn}$, there might be zero, one or two roots of the payoff function $g$.
We do know however that $f'$ has exactly one zero, so we propose to find this zero using an appropriate root-finding algorithm, and then to check whether $f$ is positive, zero or negative in this point. In case $f$ is positive, there is no zero and we can use the bounds found for the barrier condition as the range of integration. We can do the same if $f$ is zero. If $f$ is negative, we must find two zeros, again using the appropriate root-finding algorithm, initiated with a point to the left and to the right of the zero of $f'$. 

If we call these roots $\Gamma_d$ and $\Gamma_u$, where they can be respectively equal to $-\infty$ and $\infty$, then we can define $\Xi_d = \max(\Upsilon_d, \Gamma_d)$ and $\Xi_u = \min(\Upsilon_u, \Gamma_u)$ an integrate out the first dimension analytically as the next lemma shows.

\begin{lemma}
	Given $z_2$ to $z_{mn}$ and the bounds $\Xi_d(z_2,\ldots,z_{mn})$ and $\Xi_u(z_2,\ldots,z_{mn})$ on $z_1$ for which an Asian barrier option has a positive payoff and satisfies the barrier condition, then the mean value over $z_1$ is given by
	\begin{multline*}
		\mathbb{E}_{z_1}[g(z_1)|z_2,\ldots,z_{mn}]
		\\=
		\frac{1}{mn} \sum_{i=1}^{mn} 
		S_{i_1}(0) 
		e^{(r-\sigma_{i_1}/2) t_{i_2}
		+ \sum_{k=2}^{mn} a_{ik} z_k} 
		e^{a_{i1}^2/2} \left( \Phi(\Xi_u-a_{i1})-\Phi(\Xi_d-a_{i1})\right)
		- (\Phi(\Xi_u)-\Phi(\Xi_d)) K
		.
	\end{multline*}
\label{lem:AInt}
\end{lemma}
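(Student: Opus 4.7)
The plan is to recognise the left-hand side as a one-dimensional integral against the standard normal density of $z_1$, with the effective range of integration reduced to $[\Xi_d,\Xi_u]$ because, by the very construction of these bounds, the integrand coincides with the payoff of the Asian basket on this interval and is forced to zero outside it. Concretely, for $z_1\in[\Xi_d,\Xi_u]$ the barrier indicator is active and the inner $\max(\cdot,0)$ is attained by the arithmetic sum minus $K$, while outside the interval either the barrier condition fails or the running average is below $K$, so the payoff is zero. Thus
\begin{align*}
\mathbb{E}_{z_1}[g(z_1)\mid z_2,\ldots,z_{mn}]
&=
\int_{\Xi_d}^{\Xi_u}\!\left(\frac{1}{mn}\sum_{i=1}^{mn} S_{i_1}(0)\,e^{(r-\sigma_{i_1}/2)t_{i_2}+\sum_{k=2}^{mn}a_{ik}z_k}\,e^{a_{i1}z_1}-K\right)\phi(z_1)\,dz_1,
\end{align*}
where $\phi$ is the standard normal density.

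Next I would split the integral into the $-K$ part and the exponential sum part. The $-K$ contribution is immediate, giving $-K\,(\Phi(\Xi_u)-\Phi(\Xi_d))$ by the definition of the standard normal cdf. For each summand of the exponential part I would apply the completing-the-square identity
\begin{align*}
e^{a_{i1}z_1}\phi(z_1)
&=
\frac{1}{\sqrt{2\pi}}\exp\!\bigl(a_{i1}z_1-z_1^2/2\bigr)
=
e^{a_{i1}^2/2}\,\phi(z_1-a_{i1}),
\end{align*}
so that, after the translation $y=z_1-a_{i1}$, one obtains
\begin{align*}
\int_{\Xi_d}^{\Xi_u}\!e^{a_{i1}z_1}\phi(z_1)\,dz_1
&=
e^{a_{i1}^2/2}\bigl(\Phi(\Xi_u-a_{i1})-\Phi(\Xi_d-a_{i1})\bigr).
\end{align*}

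Finally, pulling the prefactors $S_{i_1}(0)\,e^{(r-\sigma_{i_1}/2)t_{i_2}+\sum_{k=2}^{mn}a_{ik}z_k}$ (which are independent of $z_1$) outside the integral, summing over $i$ and combining with the $-K$ term yields exactly the claimed formula. There is no real obstacle: the non-routine content lies entirely in \emph{justifying} that $[\Xi_d,\Xi_u]$ really is the full set of $z_1$ on which the payoff is positive and the barrier condition holds, which is guaranteed by the construction $\Xi_d=\max(\Upsilon_d,\Gamma_d)$, $\Xi_u=\min(\Upsilon_u,\Gamma_u)$ together with \RefThm{Thm:zeros} controlling the (at most two) zeros of $f(z_1)$. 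Once this is in place, the remainder is the standard Gaussian-shift computation above.
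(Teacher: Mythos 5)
Your proof is correct and takes essentially the same approach as the paper: write $\mathbb{E}_{z_1}[g(z_1)\mid z_2,\ldots,z_{mn}]$ as the integral of the payoff against the standard normal density over $[\Xi_d,\Xi_u]$, split off the $-K$ term, and evaluate each exponential summand by completing the square, which is exactly the ``straightforward calculations'' the paper leaves implicit. The identification of $[\Xi_d,\Xi_u]$ as the integration range is a hypothesis of the lemma (justified in the surrounding discussion via \RefThm{Thm:zeros}), so your closing remark on that point is commentary rather than a needed part of the argument.
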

\begin{proof}
	We can write 
	\begin{align*}
		\mathbb{E}_{z_1}[g(z_1)|z_2,\ldots,z_{mn}]
		&=
		\sum_{i=1}^{mn} S_{i_1}(0) \, 
		e^{(r-\sigma_{i_1}/2) t_{i_2} + \sum_{k=2}^{mn} a_{ik} z_k} 
		\int_{\Xi_d}^{\Xi_u} \frac{e^{a_{i1} x - x^2/2}}{\sqrt{2\pi}} dx
		-K \int_{\Xi_d}^{\Xi_u} \frac{e^{-x^2/2}}{\sqrt{2\pi}} dx
		.
	\end{align*}
	The result then follows by straightforward calculations.
\end{proof}
The extra cost of using the root-finding procedure is limited: in the worst case (i.e., when the elements in the first column of $A$ have mixed signs) we need to calculate\RefEq{eq:dfz1} and\RefEq{eq:ddfz1}.
We can actually precompute 
\begin{align*}  
    &
    S_{i_1}(0) \,
	e^{(r-\sigma_{i_1}/2) t_{i_2} 
	+ \sum_{k=2}^{mn} a_{ik} z_k}, 
\end{align*}
which has to be calculated in the original algorithm as well, and then use vector multiplication with $(a_{i1}^{\tau} \, e^{a_{i1}} )_i$, where the power of $a_{i1}$ corresponds to the $\tau$th derivative of $f$. 
The original algorithm required one such multiplication, so if we take for example four steps for our root-finder, we need three multiplications on top of the original algorithm.
Higher derivatives can be computed at almost no extra cost, since the only thing that changes is the power $\tau$. 
In our algorithm, we use a fourth-order Newton--Raphson algorithm with 10 iteration steps.

At first thought one might think that having a negative correlation might imply different signs in the first column of $A$.
Surprisingly, as shown in \RefSec{sec:sba2} this relation does not hold necessarily.
Three examples of the root-finding problem are shown in \RefFig{fig:rho_AInt}, where we consider an Asian up-\&-out option with parameters $S_1(0)=S_2(0)=100$, $r=5\%$, $\sigma_1=30\%$, $\sigma_2=40\%$, $B=110$, $K=100$, $T=1$, $m=260$ and different correlations $\rho$ between the two assets to display some possible cases for the roots of $f(z_1)$. \RefFig{fig:rho_neg_30} is an example with negative correlation $\rho=-30\%$ but only one zero.
 
\begin{figure}[ht]
\centering
\subfigure[$\rho=-90\%$]{
\includegraphics[scale=0.3]{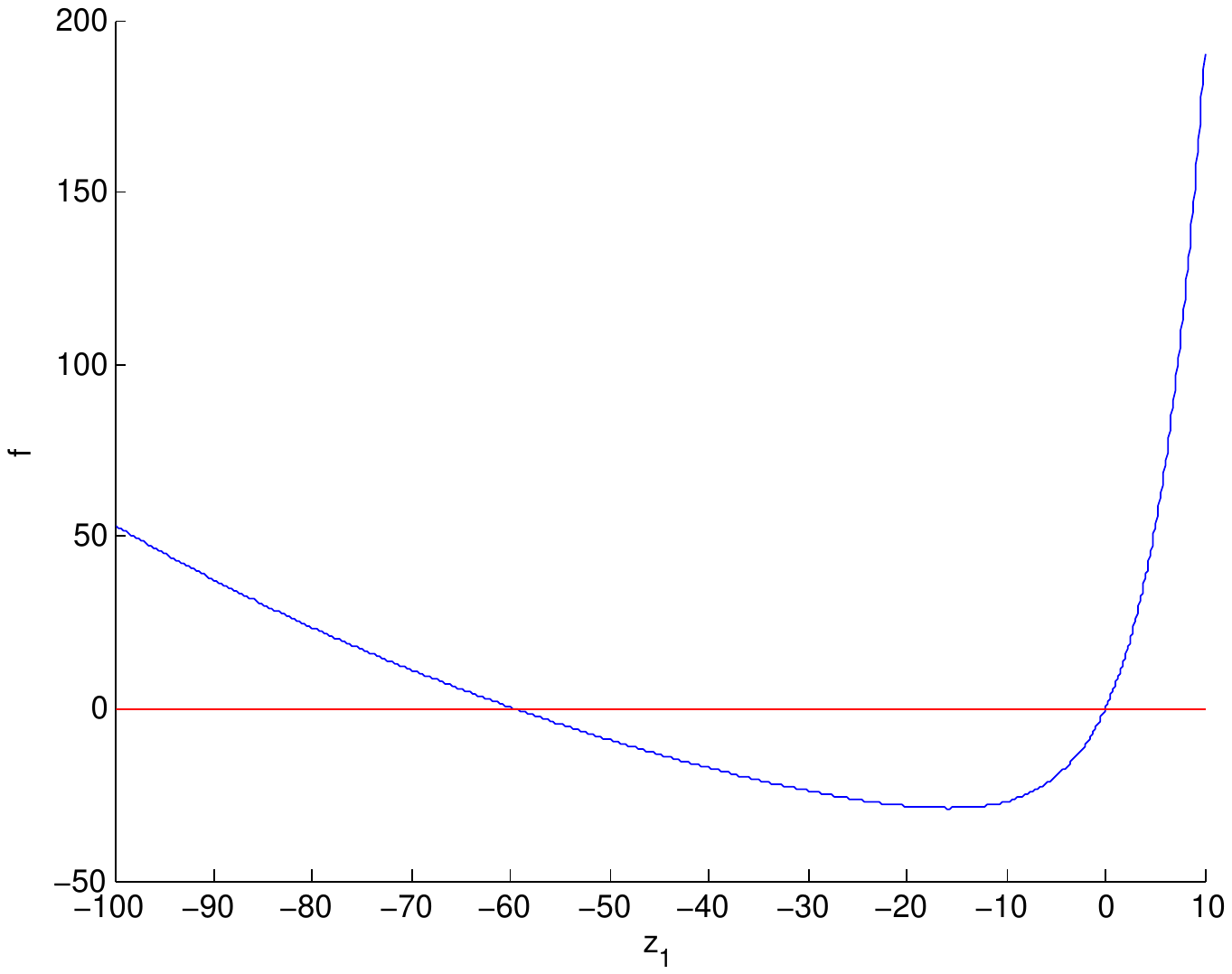}
\label{fig:rho_neg_90}
}
\subfigure[$\rho=-30\%$]{
\includegraphics[scale=0.3]{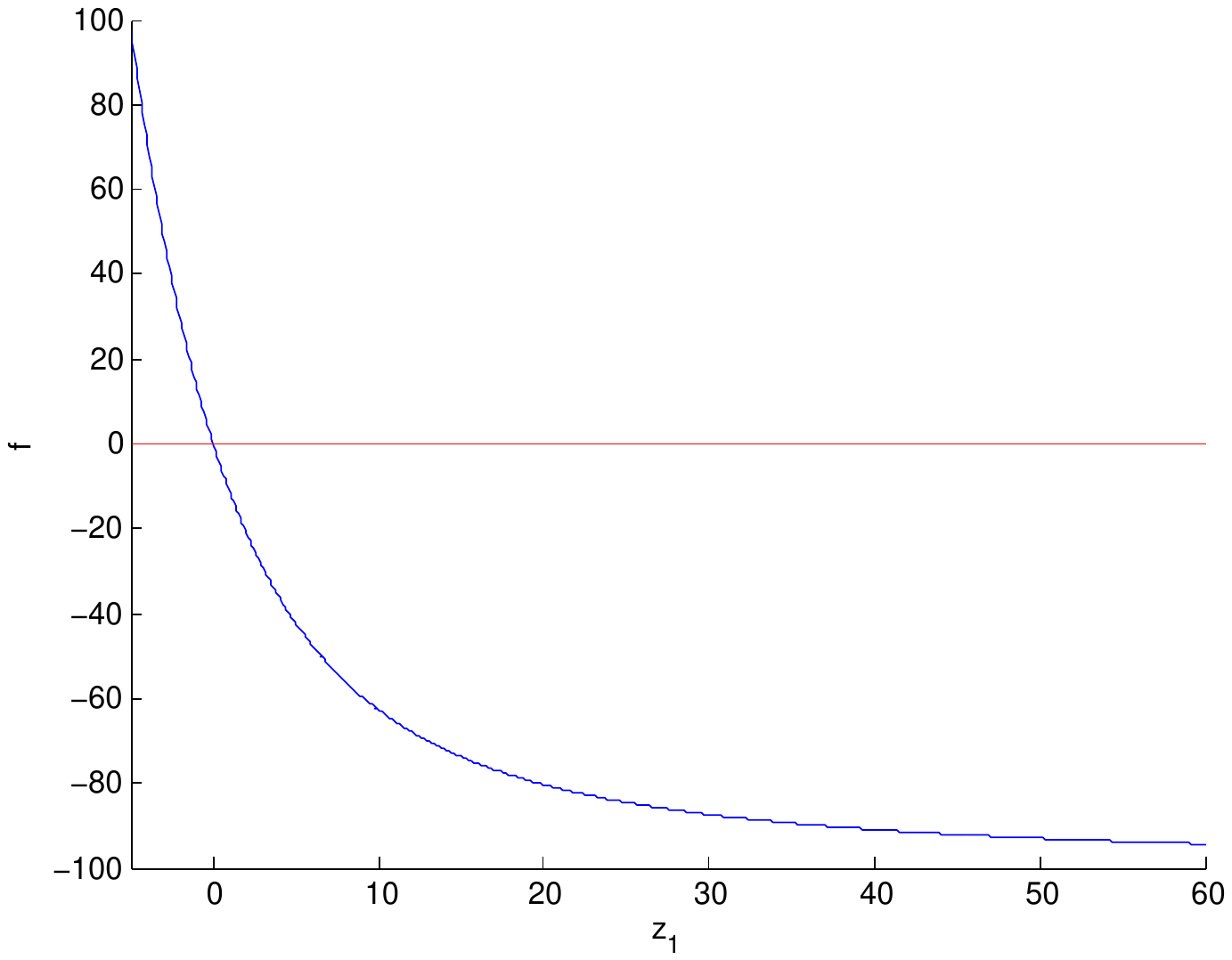}
\label{fig:rho_neg_30}
}
\subfigure[$\rho=80\%$]{
\includegraphics[scale=0.3]{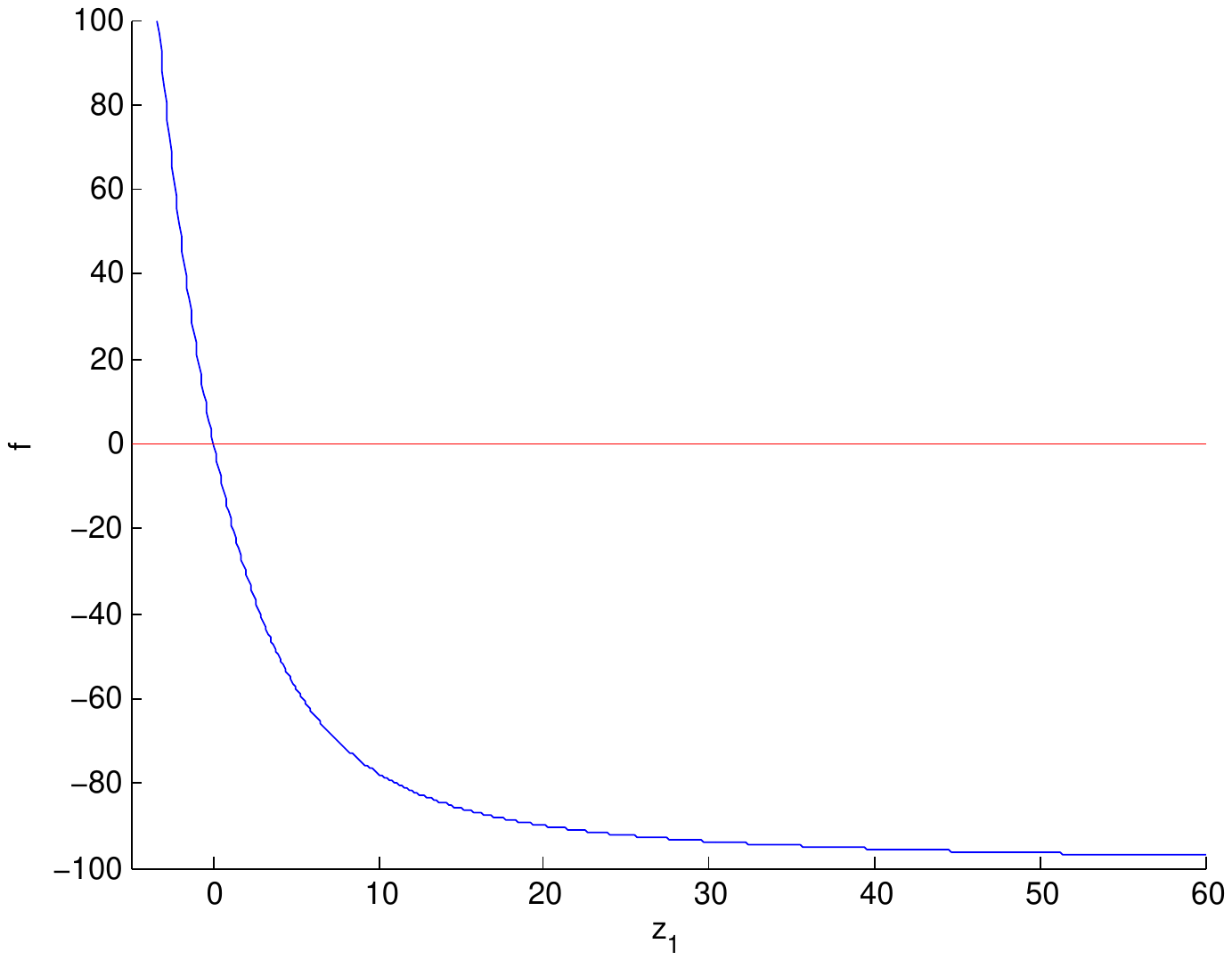}
\label{fig:rho_pos_80}
}
\caption{Possible root configurations for an Asian basket on two stocks with different correlation.}\label{fig:rho_AInt}
\end{figure}

For our numerical test we look again at the single barrier Asian basket from \RefSec{ex:sbba}.
The results for different choices of $K$, $B$ and $\sigma_1$ are shown in \RefTab{table:Ex6}. 
By QMC+LT+CS+RF we denote the new algorithm from this section.
There is a significant variance reduction compared to the other methods for all parameter choices.

\begin{table}
\begin{center}\footnotesize
\hspace{-5mm}\begin{tabular}{ c B A | c B A}
     $(P, \sigma_1, B, K)$ & QMC+LT+CS+RF & QMC+LT+CS &
     $(P, \sigma_1, B, K)$ & QMC+LT+CS+RF & QMC+LT+CS 
\\
\hline  
$(P_1,0.25,125,100)$ & $2039$\% & $958$\% &
$(P_2,0.25,125,100)$ & $1955$\% & $1172$\% \\
$(P_1,0.25,110,100)$ & $960$\% & $446$\% &
$(P_2,0.25,110,100)$ & $1172$\% & $595$\% \\
$(P_1,0.25,105,100)$ & $638$\% & $263$\% &
$(P_2,0.25,105,100)$ & $761$\% & $367$\% \\
$(P_1,0.25,110,90)$ & $910$\% & $737$\% &
$(P_2,0.25,110,90)$ & $908$\% & $782$\% \\
$(P_1,0.25,105,90)$ & $757$\% & $576$\% &
$(P_2,0.25,105,90)$ & $623$\% & $581$\% \\
$(P_1,0.25,125,110)$ & $1923$\% & $489$\% &
$(P_2,0.25,125,110)$ & $4693$\% & $683$\% \\
$(P_1,0.55,125,100)$ & $939$\% & $437$\% &
$(P_2,0.55,125,100)$ & $1082$\% & $535$\% \\
$(P_1,0.55,125,110)$ & $1035$\% & $234$\% &
$(P_2,0.55,125,110)$ & $1381$\% & $310$\% \\                         
\end{tabular}
\caption{Single barrier Asian basket. The reported numbers are the standard deviations of the MC+CS method divided by those of the 
QMC+LT+CS and QMC+LT+CS+RF methods. The MC+CS method uses $163840$ samples, while the QMC+LT+CS and QMC+LT+CS+RF methods use $4096$ samples and $40$ independent shifts.
}
\label{table:Ex6}
\end{center}
\end{table}

\subsection{Down-\&-in put option}

As a final example we consider a simple put option with a knock-in condition:
\begin{align*}
  g
  &=
  \max\left(K-S(t_m), 0\right)
  \mathbb{I}\left\{ \min_{t_i=t_1,\ldots,t_m} S(t_i)\leq B \right\}
  .
\end{align*}
We fix $m=130$, $S(0)=100$, $r=5\%$ and $T=6$m.
\RefTab{table:ExPut} shows the results for different values of $\sigma$, $B$ and $K$.
In this example of a knock-in option, we are able to force $100\%$ of the samples to result in positive payoffs. 
For a wide variety of parameter choices, including different volatilities and both in-the-money and out-the-money barriers and options we see significant gains compared to the QMC+LT+CS and QMC+LT methods.
\begin{table}
\begin{center}
\begin{tabular}{ c B A}
     $(\sigma, B, K)$ & QMC+LT+CS+RF & QMC+LT+CS  \\
\hline 
$(0.25,90,100)$ & $442$\% & $184$\%\\
$(0.25,90,90)$ & $551$\% & $134$\%\\
$(0.25,90,80)$ & $742$\% & $176$\%\\
$(0.25,75,80)$ & $884$\% & $480$\%\\
$(0.55,75,80)$ & $575$\% & $169$\%\\
$(0.55,70,60)$ & $657$\% & $185$\%
\end{tabular}
\caption{Down-\&-in put option. The reported numbers are the standard deviations of the QMC+LT method divided by those of the QMC+LT+CS and QMC+LT+CS+RF methods. All methods use $4096$ samples and $40$ independent shifts.
	 }
\label{table:ExPut}
\end{center}
\end{table}


\section{Conclusion and outlook}\label{sec:conclusion}

In this paper we have devised a conditional sampling method for QMC sampling under the LT algorithm for barrier options.
Our method tries to satisfy the barrier condition by modifying the first uniform variable used to construct the sample paths.
We have shown that the method is unbiased for randomized QMC methods and has a variance which is lower than or equal to the method without using conditional sampling.
Furthermore, the method can be used for both knock-out and knock-in conditions with the same basic principle of modifying the first uniform variable.
When adding a root-finding method we can additionally also satisfy to have always a positive payoff which further improves the performance of the method.
Extensive numerical results show the effectiveness of the new method.

While we focussed on the Black--Scholes framework in this paper, it is natural to expand our method to other market models as well.
For the Heston model we refer to the follow-up paper \cite{NicoHESLT}.

\end{document}